\tikzset{snake it/.style={decorate, decoration=snake}}
\newcommand{\eps}{\varepsilon}
\newcommand{\abs}[1]{\lvert#1\rvert}
\newcommand{\F}{\mathcal F}
\newcommand{\computer}{proved by computer}
\newtheorem{thm}{Theorem}
\newtheorem{con}[thm]{Conjecture}
\newtheorem{df}[thm]{Definition}
\theoremstyle{plain}
\newcommand{\pathc}{\pi}
\newcommand{\wordc}{\omega}
\DeclareMathOperator{\Acc}{Acc}
\DeclareMathOperator{\logAcc}{logAcc}
\newcommand{\ceil}[1]{\left\lceil#1\right\rceil}
\newcommand{\myUrl}[1]{
	\begin{center}
		{\small\url{#1}} 
	\end{center}
}
\begin{document}
	\title{Few paths, fewer words: model selection with
	automatic structure functions}
	\author{Bj{\o}rn Kjos-Hanssen}
	\maketitle
	\begin{abstract}
		We consider the problem of finding an optimal statistical model for a given binary string.
		Following Kolmogorov, we use structure functions.
		In order to get concrete results, we replace Turing machines by finite automata
		and Kolmogorov complexity by Shallit and Wang's automatic complexity.

		The $p$-value of a model for given data $x$ is the probability that there exists a model with
		as few states, accepting as few words, fitting uniformly randomly selected data $y$.

		Deterministic and nondeterministic automata can give different optimal models.
		For $x=011\, 110\, 110\, 11$, the best deterministic model has $p$-value $0.3$,
		whereas the best nondeterministic model has $p$-value $0.04$.

		In the nondeterministic case, counting paths and counting words can give different optimal models.
		For $x=01100\, 01000$, the best path-counting model has $p$-value $0.79$,
		whereas the best word-counting model has $p$-value $0.60$.
	\end{abstract}

	%\category{Theory of computation}{Formal languages and automata theory}{Grammars and context-free languages}

	%\terms
	%	automata, complexity
	%\keywords
	%	automata, complexity, structure functions, context-free language

	\section{Introduction}
		Shallit and Wang \cite{MR1897300} introduced \emph{automatic complexity} (defined below) as a computable alternative to Kolmogorov complexity.
		They considered deterministic automata, whereas Hyde and Kjos-Hanssen \cite{Kjos-EJC} studied the nondeterministic case,
		which in some ways behaves better.

		Unfortunately, even nondeterministic automatic complexity is somewhat inadequate.
		The word $00010000$ has maximal nondeterministic complexity among all binary strings of length 8. However, intuitively it is quite simple.
		One way to remedy this situation is to consider a \emph{structure function} analogous to that for Kolmogorov complexity.
		The latter was introduced by Kolmogorov at a 1973 meeting in Tallinn and studied by
		Vereshchagin and Vit\'anyi \cite{MR2103496}, Rissanen \cite{Rissanen}, and Staiger \cite{StaigerTCS}.

		Here we show that some notions in this area, in the non-deterministic setting,
		depend on whether we are counting accepting words or accepting paths.
		This is interesting because counting words is most efficient for compression, whereas counting paths seems to lead to more time-efficient computability.

		Several results are \computer. We do not know of short computer proofs (certificates) in most cases, so we only include the claim that
		the result was \computer.

		\begin{figure*}%[h]
			\begin{tikzpicture}[shorten >=1pt,node distance=1.5cm,on grid,auto]
				\node[state,initial, accepting] (q_1)   {$q_1$}; 
				\node[state] (q_2)     [right=of q_1   ] {$q_2$}; 
				\node[state] (q_3)     [right=of q_2   ] {$q_3$}; 
				\node[state] (q_4)     [right=of q_3   ] {$q_4$};
				\node        (q_dots)  [right=of q_4   ] {$\ldots$};
				\node[state] (q_m)     [right=of q_dots] {$q_m$};
				\node[state] (q_{m+1}) [right=of q_m   ] {$q_{m+1}$}; 
				\path[->] 
					(q_1)     edge [bend left]  node           {$x_1$}     (q_2)
					(q_2)     edge [bend left]  node           {$x_2$}     (q_3)
					(q_3)     edge [bend left]  node           {$x_3$}     (q_4)
					(q_4)     edge [bend left]  node [pos=.45] {$x_4$}     (q_dots)
					(q_dots)  edge [bend left]  node [pos=.6]  {$x_{m-1}$} (q_m)
					(q_m)     edge [bend left]  node [pos=.56] {$x_m$}     (q_{m+1})
					(q_{m+1}) edge [loop above] node           {$x_{m+1}$} ()
					(q_{m+1}) edge [bend left]  node [pos=.45] {$x_{m+2}$} (q_m)
					(q_m)     edge [bend left]  node [pos=.4]  {$x_{m+3}$} (q_dots)
					(q_dots)  edge [bend left]  node [pos=.6]  {$x_{n-3}$} (q_4)
					(q_4)     edge [bend left]  node           {$x_{n-2}$} (q_3)
					(q_3)     edge [bend left]  node           {$x_{n-1}$} (q_2)
					(q_2)     edge [bend left]  node           {$x_n$}     (q_1);
			\end{tikzpicture}
			\caption{
				A nondeterministic finite automaton that only accepts one word
				$x= x_1x_2x_3x_4 \ldots x_n$ of length $n = 2m + 1$.
			}
			\label{fig1}
		\end{figure*}
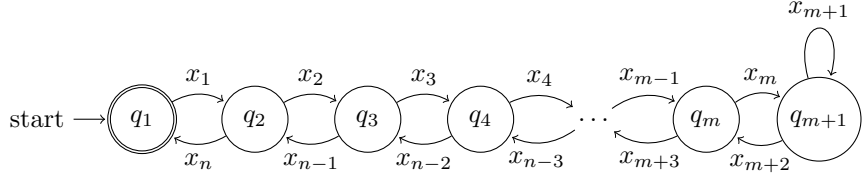

		We let $L(M)$ denote the language recognized by the automaton $M$.
		\begin{df}[Shallit and Wang \cite{MR1897300}]
			The \emph{automatic complexity} of a finite binary word \(x=x_1\dots x_n\) is 
			the least number \(A(x)\) of states of a {deterministic finite automaton} \(M\) such that 
			\[
				L(M)\cap\{0,1\}^n = \{x\},
			\]
			that is, \(x\) is the only word of length \(n\) accepted by \(M\).
			If we do not require the transition function of $M$ to be total, we obtain the \emph{nontotal automatic complexity} $A^-(x)$.
		\end{df}

		We will consider model selection in three distinct \emph{modes}:
		\begin{enumerate}
			\item the deterministic mode $\delta$,
			\item the path-counting nondeterministic mode $\pathc$, and
			\item the word-counting nondeterministic mode $\wordc$.
		\end{enumerate}
		Formally, we could take $\{1,2,3\}=\{\delta, \pathc, \wordc\}$.
		\begin{df}\label{acc}
			The number of acceptances $\Acc_n^\mu(M)$ at length $n$ for an NFA $M$ in mode $\mu$ is defined as follows.
			\begin{itemize}
				\item If $\mu$ is the \emph{deterministic mode} then $\Acc_n^\mu(M)$ is $\infty$ (or undefined) if $M$ is not deterministic.
					If $M$ is deterministic then $\Acc_n^\mu(M)$ is the number of words of length $n$ accepted by $M$.
				\item If $\mu$ is the \emph{path-counting nondeterministic mode} then
					$\Acc_n^\mu(M)$ is the number of paths of length $n$ leading to an accept state of $M$.
				\item If $\mu$ is the \emph{word-counting nondeterministic mode} then
					$\Acc_n^\mu(M)$ is the number of words of length $n$ accepted by $M$.
			\end{itemize}
		\end{df}
		Following Kolmogorov, we shall rarely consider more fine-grained acceptance counting than just by powers of $b$.
		So we define
		\(
			\logAcc_n^\mu(M, b) =
			\ceil{
				\log_b
				\Acc_n^\mu(M)
			}
		\).
		\begin{df}[{\cite{Kjos-EJC}}]\label{precise}
			The path-counting nondeterministic automatic complexity $A^{\pathc}_N(w)$ of a word $w$ is the minimum number of states of an NFA $M$
			such that $M$ accepts $w$ and
			$\Acc_{\abs{w}}^\pathc(M)=1$.
		\end{df}

		We assume that our NFAs are not generalized, i.e., they have no $\eps$-transitions.
		\begin{thm}
			It does not matter for $A_N^{\pathc}$ or $A_N^{\wordc}$ whether $\eps$-transitions are allowed.
		\end{thm}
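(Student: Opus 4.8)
\section*{Proof proposal}

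The plan is to prove two inequalities. Write $A_N^{\mu,\eps}$ for the variant of $A_N^\mu$ in which $\eps$-transitions are permitted, so that the recognized machine ranges over generalized NFAs. One direction is immediate: every $\eps$-free NFA is in particular a generalized NFA, so the minimum defining $A_N^{\mu,\eps}(w)$ is taken over a larger class of automata, whence $A_N^{\mu,\eps}(w)\le A_N^\mu(w)$ for $\mu\in\{\pathc,\wordc\}$. The content of the theorem is the reverse inequality. For this I would start from a generalized NFA $M$ witnessing $A_N^{\mu,\eps}(w)=k$ (so $M$ accepts $w$, has $k$ states, and $\Acc_{\abs{w}}^\mu(M)=1$) and manufacture an $\eps$-free NFA $M'$ with the \emph{same} $k$ states, still accepting $w$, and still satisfying $\Acc_{\abs{w}}^\mu(M')=1$. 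That yields $A_N^\mu(w)\le k=A_N^{\mu,\eps}(w)$ and closes the loop.

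The tool is the standard $\eps$-elimination construction in its ``front-closure'' form, whose key feature is that it adds no states. Keeping the state set and the initial state of $M$, I would install a transition from $p$ to $r$ on a letter $a$ in $M'$ exactly when $M$ admits a chain of $\eps$-transitions from $p$ to some state $q$ followed by an $a$-transition from $q$ to $r$, and declare a state $p$ accepting in $M'$ when its $\eps$-closure meets the accept set of $M$. It is classical that this preserves the recognized language, so $L(M')=L(M)$; in particular $w\in L(M')$. For the word-counting mode this already finishes the argument, since $\Acc_n^{\wordc}$ depends only on $L(M)\cap\{0,1\}^n$, which is unchanged; hence $\Acc_{\abs{w}}^{\wordc}(M')=\Acc_{\abs{w}}^{\wordc}(M)=1$.

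The path-counting mode is the step I expect to be the main obstacle, because $\eps$-elimination can a priori alter the number of accepting \emph{paths}: an $M$-path may wander through $\eps$-detours that $M'$ collapses. I would control this with an explicit map $\Phi$ sending each accepting $M$-path reading a length-$n$ word to the $M'$-path obtained by recording the state reached immediately after each non-$\eps$ transition. This $\Phi$ is well defined, and by inserting witnessing $\eps$-segments it is surjective onto the accepting $M'$-paths of length $n$: each $M'$-step and the final acceptance condition lift back to $\eps$-segments in $M$, producing an accepting $M$-path reading the same length-$n$ word whose image under $\Phi$ is the given $M'$-path. Consequently $\Acc_n^{\pathc}(M')\le\Acc_n^{\pathc}(M)$. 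The inequality runs in exactly the helpful direction: collapsing $\eps$-detours can only merge paths, never create new ones, so uniqueness is inherited. From $\Acc_{\abs{w}}^{\pathc}(M)=1$ we get $\Acc_{\abs{w}}^{\pathc}(M')\le 1$, while $w\in L(M')$ forces at least one accepting path, giving $\Acc_{\abs{w}}^{\pathc}(M')=1$.

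A point worth checking carefully is finiteness of the path count. Were $M$ to contain an $\eps$-cycle reachable along an accepting computation, then $\Acc_{\abs{w}}^{\pathc}(M)$ would already be infinite rather than $1$; so the hypothesis rules this out, and the lifts used to define $\Phi$ involve only $\eps$-segments without repetition. Assembling the three pieces gives $A_N^\mu(w)=A_N^{\mu,\eps}(w)$ for both $\mu=\pathc$ and $\mu=\wordc$, as claimed.
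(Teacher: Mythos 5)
Your proposal is essentially the paper's approach---in-place $\eps$-elimination that keeps the state set fixed---but you carry it out more carefully, and the extra care is genuinely needed. The paper's one-line construction puts an $a$-transition from $q_1$ to $q_2$ whenever some path from $q_1$ to $q_2$ has labels concatenating to $a$, i.e.\ it absorbs $\eps$-transitions on \emph{both} sides of each letter. Taken literally, that can inflate the path count: an interior $\eps$-step can be absorbed either backward into the preceding letter or forward into the following one, producing two distinct accepting paths in $M'$ from a single accepting path in $M$ (e.g.\ $q_0\xrightarrow{0}q_1\xrightarrow{\eps}q_2\xrightarrow{1}q_3$ yields both $q_0q_1q_3$ and $q_0q_2q_3$ in the eliminated automaton). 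Your one-sided (front-closure) elimination makes the decomposition of an $M$-path canonical, so your surjection $\Phi$ is well defined and the inequality $\Acc_n^{\pathc}(M')\le\Acc_n^{\pathc}(M)$ actually holds; this, together with your observation that $\Acc^{\pathc}=1$ already forbids reachable $\eps$-cycles, fills in exactly the step the paper glosses over. The one small friction point is that your expanded accept set can violate the paper's single-accept-state convention (stated immediately after this theorem); the fix is cosmetic---absorb trailing $\eps$-chains only at the very end of a path into the original accept state rather than enlarging the accept set---and does not disturb your counting argument.
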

		\begin{proof}
			Given an automaton $M$ using $\eps$-transitions, we define another automaton $M'$ not using any $\eps$-transitions.
			We put a transition in $M'$ between states $q_1$ and $q_2$ labeled $i$ if there is some path from $q_1$ to $q_2$ in $M$ whose labels concatenate to $i$
			under the obvious rule that $i\eps=i=\eps i$.
		\end{proof}
		We assume our automata have only a single accept state.

		The definition of $A^{\pathc}_N$ is not robust under permutation of quantifiers, in the following sense.
		\begin{df}
			Let $A^\dag(x)$ be the minimum number of states of an NFA such that
			$x$ is the only string of length $n$ that is accepted along exactly one path (but other strings may be accepted among more than one path).
		\end{df}
		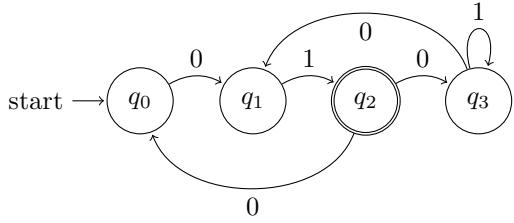
\begin{figure}%[h]
			\centering
			\begin{tikzpicture}[shorten >=1pt,node distance=1.5cm,on grid,auto]
				\node[state, initial] (q_0) {$q_0$};
				\node[state] (q_1) [right=of q_0] {$q_1$};
				\node[state, accepting] (q_2) [right=of q_1] {$q_2$};
				\node[state] (q_3) [right=of q_2] {$q_3$};
				\path[->]
					(q_0) edge [bend left]  node {$0$} (q_1)
					(q_1) edge [bend left]  node {$1$} (q_2)
					(q_2) edge [bend left]  node {$0$} (q_3)
					(q_3) edge [loop above] node {$1$} (q_3)
					(q_2) edge [bend left=70]  node {$0$} (q_0)
					(q_3) edge [bend right=70]  node {$0$} (q_1);
			\end{tikzpicture}
			\caption{
				An automaton accepting only $x=010111010$ along exactly one path, but accepting other words along multiple paths, giving Theorem \ref{daggerThm}.
			}
			\label{dagger}
		\end{figure}
		When considering automatic complexity it is often sufficient to replace an automaton by a \emph{state sequence}. A state sequence is a sequence of states, typically the sequence of states visited by the automaton during the processing a word. For computational purposes we may represent a state sequence $q_0,\dots,q_n$ as a sequence of nonnegative integers $s=s_0\dots s_n$ with the property that $s_i\le\max_{j<i}s_j+1$.

		\begin{thm}[\computer]\label{aaaa}
			$A_N^{\pathc}(010111010)=5$.
		\end{thm}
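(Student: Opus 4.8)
The plan is to reduce the claim to a finite search over \emph{state sequences} and then to run that search. The key structural observation is a monotonicity principle: for a fixed start state and accept state, adding a transition to an NFA can only increase the number of length-$n$ paths reaching the accept state, never decrease it. Consequently, if some $4$-state NFA $M$ accepts $w=010111010$ with $\Acc_9^{\pathc}(M)=1$, then its unique accepting path determines a state sequence $s_0\dots s_9$ using at most $4$ distinct states, and the \emph{path automaton} $M_s$ built from exactly the transitions $s_{i-1}\xrightarrow{w_i}s_i$ is a sub-automaton of $M$ with the same start state $s_0$ and the same (single) accept state $s_9$. Every length-$9$ walk in $M_s$ is also one in $M$, so $1\le\Acc_9^{\pathc}(M_s)\le\Acc_9^{\pathc}(M)=1$, and $M_s$ already witnesses $A_N^{\pathc}(w)\le 4$. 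Hence it suffices to decide whether \emph{any} path automaton arising from a canonical state sequence on at most $4$ states has unit path count.

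For the upper bound $A_N^{\pathc}(w)\le 5$, I would instantiate the construction of Figure~\ref{fig1} with $n=9=2m+1$, i.e.\ $m=4$, which produces an NFA on $m+1=5$ states whose only accepted word of length $9$ is $w$, traversed along a single path; equivalently, the search described below exhibits a canonical state sequence on $5$ states whose path automaton has $\Acc_9^{\pathc}=1$.

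For the lower bound $A_N^{\pathc}(w)\ge 5$, I would enumerate every canonical state sequence $s_0\dots s_9$ with values in $\{0,1,2,3\}$ subject to $s_i\le\max_{j<i}s_j+1$, form the induced path automaton $M_s$ with start $s_0$ and accept $s_9$, and count its length-$9$ accepting paths as the $(s_0,s_9)$ entry of the ninth power of the integer adjacency matrix that records, for each ordered pair of states, how many of the transitions $s_{i-1}\xrightarrow{w_i}s_i$ join them. The search confirms that no such sequence yields the value $1$, whence no $4$-state NFA accepts $w$ along a unique path. The enumeration is tiny — at most $4^{10}$ sequences before pruning — so the computation itself is routine; the main obstacle is justifying \emph{completeness} of the reduction, namely that restricting attention to minimal path automata on canonically labelled state sequences loses no genuine $4$-state solution. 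This is exactly what the monotonicity principle of the first paragraph secures, modulo the standing conventions that the NFA has a single accept state and that its states may be relabelled into canonical form.
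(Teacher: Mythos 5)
Your proposal is correct, and it matches the approach the paper takes: the paper records this result only as ``proved by computer,'' but the state-sequence representation it introduces immediately beforehand (sequences $s_0\dots s_n$ with $s_i\le\max_{j<i}s_j+1$) is exactly the canonical-form search you describe, and your monotonicity argument supplies the completeness justification for that reduction (a path sub-automaton of a unit-path-count witness is itself a unit-path-count witness) that the paper leaves implicit. The upper bound via the Kayleigh graph of Figure \ref{fig1} with $m=4$ is likewise the paper's intended witness.
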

		\begin{thm}\label{daggerThm}
			There is an $x$ such that $A^\dag(x)<A_N^{\pathc}(x)$.
		\end{thm}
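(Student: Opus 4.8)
The plan is to prove the statement by exhibiting an explicit witness, namely the four-state NFA $M$ drawn in Figure~\ref{dagger}, and pairing it with Theorem~\ref{aaaa}. I take $x = 010111010$, so $n = \abs{x} = 9$. Since $M$ has only four states, it suffices to show that $M$ satisfies the defining condition of $A^\dag$ at length $9$ for this $x$: then $A^\dag(x) \le 4$, and together with $A_N^{\pathc}(x) = 5$ from Theorem~\ref{aaaa} this gives
\[
	A^\dag(x) \le 4 < 5 = A_N^{\pathc}(x),
\]
which is exactly what is claimed. Note that no lower bound on $A^\dag$ is needed; the upper bound from the witness, combined with the computed value of $A_N^{\pathc}$, already separates the two quantities.

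The verification has two parts, both statements about the length-$9$ behavior of $M$. First I would confirm that $x$ is accepted along \emph{exactly} one path. The only nondeterministic choice in $M$ occurs at $q_2$ on input $0$, which branches to either $q_3$ or $q_0$; tracing $x$ symbol by symbol shows that at each such branch one alternative runs into a dead end (there is no transition out of the chosen state on the symbol that $x$ demands next), so precisely one accepting path for $x$ survives. Second, and this is the substance of the argument, I would show that every other word $y \in \{0,1\}^9$ fails the ``exactly one path'' test, i.e.\ each such $y$ is either rejected outright or accepted along at least two distinct paths.

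The main obstacle is this second part, since it is a uniform claim over all $2^9 - 1$ competitors rather than a single trace. The natural strategy is to exploit the fact that the branch at $q_2$ on $0$ is the sole source of path multiplicity, together with the loop at $q_3$: one argues that whenever a word other than $x$ reaches an accept state at all, some visited $q_2$-on-$0$ configuration has \emph{both} of its successors extending to accepting paths, which forces at least two accepting paths. Equivalently, one passes to the path-counting transition matrix of $M$ and checks that the relevant entry of its ninth power receives the value $1$ from a single index precisely when the input is $x$. In practice this is cleanly discharged by the same exhaustive search that underlies Theorem~\ref{aaaa}, but the structural picture explains \emph{why} the phenomenon occurs: the interaction of the $q_2$-branch with the $q_3$-loop lets most accepted length-$9$ words be read in several ways, whereas the particular interleaving of symbols in $x$ pins down a unique path while still allowing $M$ to accept competing words ambiguously. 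This last feature is exactly what $A^\dag$ permits but $A_N^{\pathc}$ forbids, and it is the source of the strict inequality.
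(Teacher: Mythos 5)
Your proposal is correct and follows essentially the same route as the paper: take $x=010111010$, cite Theorem~\ref{aaaa} for $A_N^{\pathc}(x)=5$, and use the four-state automaton of Figure~\ref{dagger} (the paper records it as the state sequence $0123333120$) to get $A^\dag(x)\le 4$. You are somewhat more explicit than the paper about what must be verified for the witness --- one accepting path for $x$, and no other length-$9$ word accepted along exactly one path --- and, like the paper, you ultimately delegate that finite check to exhaustive search.
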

		\begin{proof}
			Consider $x=010111010$.
			By Theorem \ref{aaaa}, $A_N^{\pathc}(x)=5$.
			As the state sequence 0123333120 witnesses, $A^\dag(x)\le 4$. (See Figure \ref{dagger}.)
		\end{proof}
		\begin{df}\label{df:KayleighGraph}
			Let $n = 2m + 1$ be a positive odd number, $m\ge 0$.
			A finite automaton of the form given in Figure \ref{fig1} for some choice of symbols $x_1,\dots,x_n$ and states
			$q_1,\dots,q_{m+1}$
			is called a \emph{Kayleigh graph}.
		\end{df}
		\begin{thm}[Hyde \cite{Kjos-EJC}]\label{Hyde}
			The nondeterministic automatic complexity $A^{\pathc}_N(x)$ of a word $x$ of length $n$ satisfies
			\[
				A^{\pathc}_N(x) \le {\lfloor} n/2 {\rfloor} + 1\text{.}
			\]
		\end{thm}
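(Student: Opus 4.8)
The plan is to prove the bound constructively, by exhibiting for every word $x$ of length $n$ a single NFA $M$ with $\floor{n/2}+1$ states that accepts $x$ and satisfies $\Acc_{\abs{x}}^{\pathc}(M)=1$, so that Definition \ref{precise} immediately yields $A_N^{\pathc}(x)\le\floor{n/2}+1$. I would treat the odd and even cases separately, the odd case being exactly the Kayleigh graph of Definition \ref{df:KayleighGraph}.

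For odd $n=2m+1$ I take the Kayleigh graph of Figure \ref{fig1} on the states $q_1,\dots,q_{m+1}$, with $q_1$ both initial and accepting; it has $m+1=\floor{n/2}+1$ states. The intended accepting path reads $x$ by ascending the forward edges $q_1\to\cdots\to q_{m+1}$ (letters $x_1,\dots,x_m$), taking the self-loop at $q_{m+1}$ once (letter $x_{m+1}$), and descending $q_{m+1}\to\cdots\to q_1$ (letters $x_{m+2},\dots,x_n$). This path plainly exists, so the only real work is to show it is the \emph{unique} path of length $n$ from $q_1$ back to $q_1$.

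The key step is a displacement count on the state index. Record each transition by its effect on the index $i$ of $q_i$: forward edges add $1$, backward edges subtract $1$, and the loop (available only at the top index $m+1$) adds $0$. A closed walk of length $n$ at $q_1$ that uses $a$ forward, $c$ backward and $d$ loop steps must satisfy $a=c$ (zero net displacement) and $2a+d=n=2m+1$. The second equation forces $d$ to be odd, hence $d\ge 1$; but a loop step can occur only at index $m+1$, which is reached from index $1$ only after at least $m$ forward steps, so $a\ge m$, whence $d=2m+1-2a\le 1$ and therefore $a=c=m$, $d=1$. Finally, with exactly $m$ forward steps available and index $m+1$ to be attained, no backward step may precede the loop (it would demand an extra forward step beyond the budget), so the walk must ascend fully, loop once, and descend fully: it is the intended path. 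This proves $\Acc_n^{\pathc}(M)=1$ and settles the odd case.

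The even case $n=2m$ is where the difficulty lies, and I expect it to be the main obstacle. Here $\floor{n/2}+1=m+1$ as well, but the same count now gives $2a+d=2m$ with $d$ even, so $d=0$ is permitted and the constraints collapse to $a=c=m$ with no loop required. For the naive ``there and back'' automaton this admits the whole Catalan family of oscillating walks (for instance $q_1\to q_2\to q_1\to q_2\to\cdots$), so that automaton has many accepting paths and fails to witness the bound. The repair must destroy the adjacent-level $2$-cycles $q_i\leftrightarrow q_{i+1}$ that generate these oscillations while still spelling an arbitrary binary word along one path; for $n=4$ this is achieved by the state sequence $01220$, whose return to $q_1$ is a single long edge from the top rather than a symmetric descent. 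I would design an analogous even-length gadget and re-run the displacement bookkeeping, the delicate point — and the crux of the whole proof — being to verify that the modified graph still admits no alternative closed walk of length $n$.
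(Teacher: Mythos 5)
Your odd-case argument is correct and matches the paper's approach exactly: the paper's own proof is the one-line sketch ``if $n$ is odd, a Kayleigh graph witnesses this; if $n$ is even, a slight modification suffices,'' and your displacement count ($a=c$, $2a+d=2m+1$ forces $d$ odd, reaching index $m+1$ forces $a\ge m$, hence $a=c=m$, $d=1$, and the budget argument then forces monotone ascent and descent) is a clean and complete justification of the odd half. Since each ordered pair of adjacent states carries exactly one transition, the unique closed walk is the unique accepting path, so $\Acc_n^{\pathc}(M)=1$ as required by Definition \ref{precise}.

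The gap is the even case, which you explicitly leave as a plan (``I would design an analogous even-length gadget and re-run the displacement bookkeeping'') rather than a proof. You are right that the naive there-and-back automaton on $m+1$ states fails for $n=2m$ because $d=0$ becomes admissible and the adjacent-level $2$-cycles generate a Catalan family of spurious closed walks; but you overestimate the difficulty of the repair, calling it ``the crux of the whole proof.'' The standard fix reduces the even case to the odd case you already proved: for $n=2m$ with $m\ge 1$, adjoin a fresh initial state $q_0$ with a single outgoing transition $q_0\to q_1$ labeled $x_1$ and no incoming transitions, and attach it to the Kayleigh graph of the odd-length suffix $x_2\cdots x_n$ (length $2m-1$, hence $m$ states $q_1,\dots,q_m$ by Definition \ref{df:KayleighGraph}), taking $q_1$ as the accept state. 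This gives $m+1=\floor{n/2}+1$ states; every accepting path of length $n$ must begin with the forced step $q_0\to q_1$ and then trace a closed walk of length $2m-1$ at $q_1$ inside the Kayleigh graph, which is unique by your odd-case analysis, and no path can revisit $q_0$. Your $n=4$ gadget with state sequence $01220$ does work, but without this general construction (or a verified general version of your gadget) the theorem is only proved for odd $n$.
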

		\begin{proof}
			If $n$ is odd, then a Kayleigh graph witnesses this inequality.
			If $n$ is even, a slight modification suffices.
		\end{proof}
		\begin{df}
			Let $\mu\in\{\wordc,\pathc\}$.
			Suppose $x$ is a binary word of length $n$.
			$S^{\mu}_x$ is defined to be the set of pairs of integers $(q,m)$ such that
			there exists an NFA $M$ with $x\in L(M)$, at most $q$ states, and $\logAcc_n^\mu(M, 2)\le m$.
		\end{df}
		We note that $S^{\mu}_x$ has the upward closure property
		\[
			q\le q', m\le m', (q,m)\in S^{\mu}_x \quad\Longrightarrow\quad (q',m')\in S^{\mu}_x.
		\]
		From $S^{\mu}_x$ we can define the structure function $h^{\mu}_x$ and
		the dual structure function $h_x^{*\mu}$.
		The definition was presented to us by Vereshchagin (personal communication, 2014), inspired by \cite{MR2103496}.
		\begin{df}
			\begin{eqnarray*}
				h^{*\mu}_x(m) &=& \min\{k : (k,m)\in S^{\mu}_x \},\\
				h^{\mu}_x(k)  &=& \min\{m: (k,m)\in S^{\mu}_x\}.
			\end{eqnarray*}
		\end{df}
		Note that $S^{\pathc}_x \subseteq S^{\wordc}_x$ and hence 
		$h^{\wordc}_x(k)\le h^{\pathc}_x(k)$ and
		$h^{*\wordc}_x(m)\le h^{*\pathc}_x(m)$ for each $k$, $m$, and $x$,.
		Upper bounds on $h^{*\wordc}(m)$, generalizing the $m = 0$ case covered by Hyde's Theorem \ref{Hyde}, were studied in \cite{Kjos-TCS}.
		It may be observed that the proofs given there are based on counting accepting paths and hence apply equally to $h^{*\pathc}_x$.

		\begin{df}
			For a word $x$, the word-based nondeterministic automatic complexity of $x$ is defined by
			\[
				A^{\wordc}_N(x) = h_x^{*\wordc}(0).
			\]
		\end{df}
		Equivalently,
		\[
		 	A^{\wordc}_N(x) = A^{\wordc}_N(\{x\}).
		\]
		\begin{con}\label{mainConj}
			There is an $x$ such that $A^{\wordc}_N(x) \ne A^{\pathc}_N(x)$.
		\end{con}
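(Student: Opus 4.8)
Since the path-counting mode is strictly more restrictive than the word-counting mode, the inequality $A_N^{\wordc}(x)\le A_N^{\pathc}(x)$ holds for every $x$: an NFA witnessing $\Acc_n^{\pathc}(M)=1$ accepts $x$ along exactly one path and accepts nothing else of length $n$, so in particular it accepts $x$ as the unique word of that length, giving $\Acc_n^{\wordc}(M)=1$. Thus any witness for $A_N^{\pathc}$ is also a witness for $A_N^{\wordc}$, and the conjecture is equivalent to exhibiting a single $x$ for which the inequality is strict, $A_N^{\wordc}(x)<A_N^{\pathc}(x)$. The plan is to locate such an $x$ by computer search and then certify the separation by an explicit word-counting automaton on the upper-bound side together with an exhaustive lower bound on the path-counting side.

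For the upper bound I would construct, for the candidate $x$ of length $n$, an NFA $M$ on $k$ states with $x\in L(M)$ that accepts $x$ along at least two distinct paths while accepting no other word of length $n$, so that $\Acc_n^{\wordc}(M)=1$ and hence $A_N^{\wordc}(x)\le k$. The design principle is to begin from a tight path-counting layout and introduce a single \emph{shortcut} --- a merged state or an extra back-edge --- chosen so that the alternate route through it spells exactly the same symbols of $x$ as the original route, thereby manufacturing a second accepting path without admitting any new length-$n$ word. The $A^\dag$ example of Figure \ref{dagger}, where relaxing the single-path requirement already drops the state count from $5$ to $4$, is precisely the kind of phenomenon to exploit, except that here the relaxation must preserve word-uniqueness rather than path-uniqueness. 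I would verify $\Acc_n^{\wordc}(M)=1$ by determinizing $M$ and counting length-$n$ words, and verify the multiplicity of accepting paths from the $n$-th power of the adjacency matrix.

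The lower bound $A_N^{\pathc}(x)\ge k+1$ is the crux. Here I would appeal to the state-sequence normalization described before Theorem \ref{aaaa}: any NFA accepting $x$ along a unique path may be assumed to traverse a sequence $s_0\dots s_n$ with $s_i\le\max_{j<i}s_j+1$, so the search over $k$-state path-counting witnesses reduces to enumerating such sequences together with the admissible extra transitions. For each candidate one checks, again via the $n$-th matrix power, whether $\Acc_n^{\pathc}=1$ is attainable; the claim is that for the chosen $x$ no $k$-state candidate succeeds, exactly as in the computer verification establishing Theorem \ref{aaaa}.

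The main obstacle is this path-counting lower bound. The number of extra transitions that may be appended to a given state sequence grows rapidly with $k$, so exhaustiveness hinges on aggressive pruning: discarding unreachable or symmetric configurations and cutting off any partial automaton that already forces a second accepting path or a second accepted word. A secondary difficulty is that the separation may first surface only at a length $n$ large enough to make the search in both modes expensive, so scanning lengths upward until the first witness appears --- in the spirit of the other computer-assisted results in the paper --- is the natural but costly route. Should a purely combinatorial separation be preferred, one could instead attempt to promote a known difference between the structure functions $h_x^{\wordc}$ and $h_x^{\pathc}$ down to the value $m=0$, since $A_N^{\mu}(x)=h_x^{*\mu}(0)$; engineering such a construction so that the saving is realized precisely at zero excess acceptances is, I expect, the hardest part.
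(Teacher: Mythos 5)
The statement you are asked to prove is labeled a \emph{conjecture} in the paper, and the paper does not prove it: it remains open there, supported only by the adjacent results that $h^{*\wordc}_x(1)<h^{*\pathc}_x(1)$ for $x=000010000$ (Theorem \ref{smallThm}) and that $A_N^{\wordc}(\F)\ne A_N^{\pathc}(\F)$ for the doubleton $\F=\{0110,1111\}$ (Theorem \ref{chokoUndKeks}). Your opening observation is sound and matches the paper's remark that $S^{\pathc}_x\subseteq S^{\wordc}_x$: a path-counting witness is automatically a word-counting witness, so $A_N^{\wordc}(x)\le A_N^{\pathc}(x)$ always, and the conjecture amounts to finding an $x$ where the inequality is strict. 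Your closing suggestion --- pushing the structure-function separation from $m=1$ down to $m=0$ --- is likewise exactly the framing the paper itself gives when it notes that $A_N^{\mu}(x)=h^{*\mu}_x(0)$ and that Theorem \ref{smallThm} is the $m=1$ analogue of the desired statement.

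However, what you have written is a research plan, not a proof. The genuine gap is that no witness $x$ is ever exhibited: the candidate is to be ``located by computer search,'' the upper-bound automaton is described only by a design principle (a shortcut creating a second accepting path that spells the same word), and the lower bound is deferred to an exhaustive enumeration that is never carried out. Each of these is precisely where the difficulty lives, and the paper's Theorem \ref{tenComp} shows the difficulty is real: no binary word of length at most $10$ works, so any witness must be found at length $11$ or beyond, where the path-counting lower-bound search you describe becomes expensive. Until a concrete $x$, a concrete $k$-state word-counting automaton, and a verified exhaustive refutation of all $k$-state path-counting witnesses are supplied, the conjecture is not established; your proposal correctly identifies the shape such a proof would take but does not close it.
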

		Conjecture \ref{mainConj} lies at the crossroads of, and indeed was the inspiration for, the following results.
		\begin{itemize}
			\item Theorem \ref{smallThm}: $h^{*\wordc}_x(m)\ne h^{*\pathc}_x(m)$ for the word $000010000$, at $m=1$.
			Conjecture \ref{mainConj} states that there is even such an example with $m=0$.
			\item Theorem \ref{chokoUndKeks}: $A_N^{\wordc}(\F)\ne A_N^{\pathc}(\F)$ for the doubleton $\F=\{0110,1111\}$.
			Conjecture \ref{mainConj} states that there is a singleton example.
		\end{itemize}
		
		\begin{thm}[\computer]\label{tenComp}
			There is no binary word $x$ with $\abs{x}\le 10$ and $A^{\wordc}_N(x) \ne A^{\pathc}_N(x)$.
		\end{thm}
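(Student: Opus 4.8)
The plan is to verify the statement by exhaustive computation over all $2^{11}-2 = 2046$ nonempty binary words of length at most $10$, computing $A^{\wordc}_N(x)$ and $A^{\pathc}_N(x)$ for each and confirming that they agree. Since $S^{\pathc}_x \subseteq S^{\wordc}_x$ already gives $A^{\wordc}_N(x) \le A^{\pathc}_N(x)$ for free, what the computation really checks is the reverse inequality: that whenever a word of length $n=\abs{x}$ can be isolated in the word-counting sense by an NFA on $k$ states, it can already be isolated with a unique accepting path on $k$ states.

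The first step is to reduce the search from arbitrary NFAs to \emph{state sequences}, as described before Theorem \ref{aaaa}. Given any NFA $M$ on $k$ states with $x \in L(M)$, fix one accepting path for $x$ and let $M'$ be the sub-automaton retaining only the transitions $(s_{i-1}, x_i, s_i)$ used along that path, with start state $s_0$ and unique accept state $s_n$. Since $L(M') \subseteq L(M)$, we have $L(M') \cap \{0,1\}^n = \{x\}$ whenever $M$ was a word-counting witness, and deleting transitions cannot create new accepting paths, so $\Acc^{\pathc}_n(M') \le \Acc^{\pathc}_n(M)$ whenever $M$ was a path-counting witness. As $M'$ uses at most $k$ states, both $A^{\wordc}_N(x)$ and $A^{\pathc}_N(x)$ are attained by state-sequence automata. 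Thus the candidate automata for the two modes are drawn from the \emph{same} finite family $\{s_0\dots s_n : s_0 = 0,\ s_i \le \max_{j<i}s_j + 1\}$; only the acceptance test differs.

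Next I would make the search finite and fast. Hyde's Theorem \ref{Hyde} bounds $A^{\pathc}_N(x) \le \floor{n/2}+1 \le 6$, so it suffices to enumerate canonical state sequences using at most six states. For each induced automaton two tests are run: the number of length-$n$ walks from start to accept is read off as the $(s_0, s_n)$ entry of the $n$-th power of the integer transition-count matrix, giving $\Acc^{\pathc}_n$; and the number of distinct accepted words of length $n$ is computed by a layered dynamic program over the subset (powerset) construction, giving $\Acc^{\wordc}_n$. Taking the least number of states for which $\Acc^{\pathc}_n = 1$ (respectively $\Acc^{\wordc}_n = 1$) yields $A^{\pathc}_N(x)$ (respectively $A^{\wordc}_N(x)$), and the program asserts equality for every $x$.

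The main obstacle is conceptual rather than computational: the correctness of the whole verification rests on the state-sequence reduction above, and in particular on the claim that in the word-counting mode no genuinely branching automaton, exploiting several accepting paths for $x$, can ever beat the best single-path automaton in state count. The sub-automaton argument settles this, but it must be applied with care, since the retained automaton $M'$ may still accept $x$ along several paths; what matters is only that it accepts no new word, which the inclusion $L(M') \subseteq L(M)$ guarantees. The remaining risk is purely one of implementation: ensuring the canonical enumeration of state sequences misses no automaton up to relabeling of states, and that the subset-construction word count is handled correctly, where it is enough to detect a second accepted word. Once these are in place, the finiteness bounds make the exhaustive check routine.
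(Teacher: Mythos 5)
Your proposal is correct and takes essentially the same approach as the paper, whose ``proof'' consists only of the tag \emph{proved by computer}: the intended verification is exactly an exhaustive search over canonical state sequences with at most $\floor{n/2}+1\le 6$ states, testing the path count by matrix powers and the word count by the subset construction. The one step the paper leaves implicit---that the word-counting minimum $A^{\wordc}_N(x)$ is also attained by a state-sequence sub-automaton $M'$, because $L(M')\subseteq L(M)$ and $x\in L(M')$ even though $M'$ may still accept $x$ along several paths---is precisely the point you identify and justify correctly, so your writeup supplies a sound correctness argument for the search.
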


	\section{Structure functions}\label{Igusa}
		We now show that the automatic complexity structure function of a word $x$
		sometimes depends on whether we are counting accepting paths or accepted words.
		\begin{thm}\label{pacific}
			For any word $x=x_1\dots x_n$,
			\[
				h_x^{*\wordc}(n-2)\le 2.
			\]
		\end{thm}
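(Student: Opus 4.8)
The plan is to unpack the inequality into a concrete construction and then exhibit a single two-state witness. By definition $h_x^{*\wordc}(n-2)\le 2$ means $(2,n-2)\in S_x^\wordc$, i.e.\ that there is an NFA $M$ with at most two states and $x\in L(M)$ for which $\logAcc_n^\wordc(M,2)\le n-2$. Since $\logAcc_n^\wordc(M,2)=\ceil{\log_2\Acc_n^\wordc(M)}$ and $\Acc_n^\wordc$ counts \emph{words}, the goal reduces to building a two-state automaton that accepts $x$ but accepts at most $2^{n-2}$ words of length $n$ altogether (I will assume $n\ge 2$, so that $n-2\ge 0$).

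The automaton I would use forces the first and last symbols of the input and leaves the $n-2$ interior symbols free. Take states $q_1$ (simultaneously initial and the unique accept state) and $q_2$, with the four transitions $q_1\xrightarrow{x_1}q_2$, $q_2\xrightarrow{0}q_2$, $q_2\xrightarrow{1}q_2$, and $q_2\xrightarrow{x_n}q_1$. The only way out of the initial state is to read $x_1$, the only edge into the accept state reads $x_n$, and in between the machine may idle at $q_2$ on either symbol.

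I would then verify the key equality $L(M)\cap\{0,1\}^n=\{\,y: y_1=x_1,\ y_n=x_n\,\}$. The inclusion $\supseteq$ is witnessed by the run that reads $x_1$ to reach $q_2$, self-loops through the interior, and reads $x_n$ to land on $q_1$; the inclusion $\subseteq$ holds because $q_1$ has no outgoing edge on $\overline{x_1}$ and no incoming edge except the one labelled $x_n$, so any length-$n$ run accepted at $q_1$ must begin with $x_1$ and end with $x_n$. This set has exactly $2^{n-2}$ elements and contains $x$, whence $\logAcc_n^\wordc(M,2)=\ceil{\log_2 2^{n-2}}=n-2$ and the claim follows.

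The main obstacle, modest as it is, lies in the last verification: because $q_1$ is also accepting, nondeterministic runs may re-enter $q_1$ in the interior (by reading $x_n$ then $x_1$) and leave again, producing many distinct accepting \emph{paths}. I must confirm that this path proliferation never enlarges the set of accepted \emph{words} beyond those with the prescribed endpoints — which is exactly what the count $2^{n-2}$ records. I would also handle the two regimes $x_1=x_n$ and $x_1\ne x_n$ uniformly (when $x_1=x_n$ the symbol $x_n$ simultaneously labels a self-loop and the return edge to $q_1$), and note the boundary case $n=2$, where the accepted set collapses to the singleton $\{x\}$ and $\logAcc_n^\wordc(M,2)=0$.
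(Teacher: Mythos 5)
Your construction is exactly the automaton the paper uses (two states, initial-and-accepting start state, a transition out on $x_1$, self-loops on $0$ and $1$ at the other state, and a return transition on $x_n$), and your verification that the accepted length-$n$ words are precisely those with the prescribed first and last symbols, hence $2^{n-2}$ of them, is correct. The paper's proof consists only of the picture of this NFA, so your argument matches it and merely supplies the routine checks (including the harmless $n=2$ boundary case) explicitly.
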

		\begin{proof}
			It suffices to consider the following NFA:
			\[
			\centering
			\begin{tikzpicture}[shorten >=1pt,node distance=1.5cm,on grid,auto]
				\node[state, initial, accepting] (q_0) {$q_0$};
				\node[state] (q_1) [right=of q_0] {$q_1$};
				\path[->]
					(q_0) edge [bend left]  node {$x_1$} (q_1)
					(q_1) edge [loop above]  node {$0$} (q_1)
					(q_1) edge [loop below]  node {$1$} (q_1)
					(q_1) edge [bend left]  node {$x_n$} (q_0);
			\end{tikzpicture}\qedhere
			\]
		\end{proof}
		\begin{thm}[\computer]\label{bbbb}
			Let $x=001011$. Then
			$h_x^{*\pathc}(\abs{x} - 2) = 3$.
		\end{thm}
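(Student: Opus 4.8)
The plan is to prove the two inequalities $h_x^{*\pathc}(4)\le 3$ and $h_x^{*\pathc}(4)\ge 3$ separately. Throughout, $n=\abs{x}=6$ and $m=n-2=4$, so membership $(k,4)\in S_x^{\pathc}$ means exactly that some NFA $M$ with at most $k$ states accepts $x$ and has $\Acc_6^{\pathc}(M)\le 2^4=16$; here $\Acc_6^{\pathc}(M)=(B^6)_{q_0,\mathrm{acc}}$, where $q_0$ is the start state and $B=A_0+A_1\in\{0,1,2\}^{q\times q}$ is the total transition-multiplicity matrix ($A_c$ the $0/1$ transition matrix for letter $c$). Thus the whole problem is the behavior of a single entry of $B^6$.

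For the upper bound I would exhibit one three-state witness. A convenient choice is the NFA on $\{q_0,q_1,q_2\}$ whose accepting run on $001011$ follows the state sequence $q_0q_1q_2q_0q_1q_2q_0$; reading off the forced transitions gives $A_0:q_0\to q_1,\,q_1\to q_2$ and $A_1:q_2\to q_0,\,q_1\to q_2$, hence $B=\begin{pmatrix}0&1&0\\0&0&2\\1&0&0\end{pmatrix}$. Since $B$ is a weighted $3$-cycle, $B^3=2I$ and therefore $B^6=4I$, so $\Acc_6^{\pathc}(M)=(B^6)_{q_0,q_0}=4\le 16$ while $x\in L(M)$ by construction. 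This already yields the stronger $(3,2)\in S_x^{\pathc}$, and it dovetails with Theorem \ref{pacific}, which gives $h_x^{*\wordc}(4)\le 2$, so the point is that path-counting needs the extra state.

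For the lower bound I must rule out one and two states. One state is immediate: accepting a word containing both letters forces self-loops on $0$ and on $1$, so $B=(2)$ and $\Acc_6^{\pathc}=2^6=64>16$. For two states, since enlarging the transition set only increases the nonnegative entries of $B^6$, it suffices to run through the finitely many $B\in\{0,1,2\}^{2\times2}$ and classify those whose relevant from-$q_0$ entry is at most $16$. A short computation of $B^6$ shows that every $B$ with a self-loop of weight $2$, or with off-diagonal product $\ge 2$, already forces at least $32$ (typically $43$ or $64$) accepting paths; the only survivors are the low-growth matrices $\begin{pmatrix}0&1\\1&0\end{pmatrix}$, $\begin{pmatrix}1&1\\1&0\end{pmatrix}$ and their reflections (Fibonacci-type, giving entries $\le 13$) together with the Jordan block $\begin{pmatrix}1&1\\0&1\end{pmatrix}$ (giving $6$).

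The heart of the argument—and where I expect the real work and any hidden error to lie—is then showing that none of these surviving matrices can actually accept $001011$. For each such $B$ I would enumerate the compatible letterings $(A_0,A_1)$ (each ordered pair of states carries at most one letter) and simulate $001011$ as an NFA, verifying the reachable-state set becomes empty before position $6$. The structural reason they fail is rigidity of the acceptable word-shapes: the pure $2$-cycle would force the single edge $q_0\to q_1$ to carry both letters, while the Fibonacci and Jordan matrices only admit words of the form ``a block of one letter, a single switch, a block of one letter,'' and $001011$ has no such shape. Checking this for all compatible letterings is precisely the exhaustive finite verification the theorem delegates to the computer.
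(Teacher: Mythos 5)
Your overall skeleton is right, and the paper itself offers no written argument (the result is only flagged as proved by computer), so an explicit reconstruction is genuinely useful. Your upper bound is complete and correct: the three-state NFA with accepting run $q_0q_1q_2q_0q_1q_2q_0$ on $001011$ has $B^3=2I$, hence $B^6=4I$ and exactly $4\le 2^4$ accepting paths of length $6$, giving $(3,2)\in S_x^{\pathc}$ and so $h_x^{*\pathc}(4)\le 3$. The one-state case is airtight, and the reduction for two states is sound in principle: any accepting run determines a minimal transition multiset with matrix $B_{\min}\le B$ entrywise, so $(B^6)_{q_0,\mathrm{acc}}\ge (B_{\min}^6)_{q_0,\mathrm{acc}}$ and it suffices to rule out the finitely many minimal configurations.

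The gap is in your filtering step. The claim that every $B$ with a weight-$2$ self-loop or off-diagonal product $\ge 2$ forces at least $32$ accepting paths is false: $B=\left(\begin{smallmatrix}0&2\\1&0\end{smallmatrix}\right)$ has $B^2=2I$, so $B^6=8I$ and the from-$q_0$ entries are $8$ and $0$; likewise $\left(\begin{smallmatrix}1&2\\0&1\end{smallmatrix}\right)$ gives $(B^6)_{q_0,q_1}=12$ and $\left(\begin{smallmatrix}0&2\\0&1\end{smallmatrix}\right)$ gives $2$. So your list of survivors is incomplete, and a verification restricted to the matrices you name would not establish the lower bound. The theorem is not actually endangered: each missing survivor dies for the lettering reasons you already articulate (a strictly alternating run $q_0q_1q_0q_1q_0q_1q_0$ spelling $001011$ forces \emph{both} off-diagonal entries to be $2$, yielding $64$ paths, while the upper-triangular matrices only accept words of block--switch--block shape, which $001011$ is not). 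But as written the case analysis does not cover all cases. The honest repair is either to correct the filter or to drop it and exhaustively check all $2^6$ state sequences (equivalently all reachable $B_{\min}$), which is precisely the paper's unwritten computer search.
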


		\begin{thm}\label{mainThm}
			There is a binary word $x$ of length 6 and an $m$ such that
			$h^{*\wordc}_x(m)< h^{*\pathc}_x(m)$.
		\end{thm}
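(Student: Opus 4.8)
The plan is to produce the separation with a single concrete word rather than a general construction, exploiting the two sharp computations already in hand. I would take $x = 001011$, of length $n = 6$, and work at $m = n - 2 = 4$. On the word-counting side, Theorem~\ref{pacific} gives the universal bound $h_x^{*\wordc}(4) \le 2$: the two-state automaton in its proof accepts exactly those length-$6$ words beginning with $x_1$ and ending with $x_n$, of which there are $2^{4} = 16$, so $\logAcc_6^{\wordc}(M,2) = \ceil{\log_2 16} = 4$ and hence $(2,4) \in S_x^{\wordc}$. On the path-counting side, the computer-verified Theorem~\ref{bbbb} states $h_x^{*\pathc}(4) = 3$. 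Chaining these gives $h_x^{*\wordc}(4) \le 2 < 3 = h_x^{*\pathc}(4)$, which is the desired strict inequality with $m = 4$.

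So the entire argument reduces to the two cited theorems, and the only real content lies in the lower-bound half of Theorem~\ref{bbbb}: that no two-state NFA accepting $001011$ can hold the number of length-$6$ accepting paths to $16$ or fewer. It is instructive to see why the very automaton that witnesses the word-counting bound does \emph{not} transfer. That automaton is genuinely nondeterministic, since at the loop state $q_1$ reading $1$ one may either stay at $q_1$ or return to the accepting $q_0$; counting paths (not words) into the accept state, a transfer-matrix recursion $a_{k+1} = b_k$, $b_{k+1} = a_k + 2b_k$ with $a_0 = 1$, $b_0 = 0$ yields $a_6 = 29$, so $\logAcc_6^{\pathc}(M,2) = \ceil{\log_2 29} = 5$ rather than $4$.

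The hard part, then, is not this single automaton but the claim that \emph{every} two-state NFA recognizing $001011$ overshoots $16$ paths, forcing a third state; this is a finite but delicate search over two-state transition structures and is exactly what Theorem~\ref{bbbb} certifies by computer. Granting it, the theorem follows immediately, and I would present the proof as the one-line combination of Theorems~\ref{pacific} and~\ref{bbbb} at $x = 001011$, $m = 4$.
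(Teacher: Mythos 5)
Your proposal is correct and is essentially identical to the paper's proof: the same word $x=001011$ at $m=\abs{x}-2=4$, combining Theorem~\ref{pacific} for the upper bound $h_x^{*\wordc}(4)\le 2$ with the computer-verified Theorem~\ref{bbbb} for $h_x^{*\pathc}(4)=3$. Your added computations (the count of $2^4=16$ accepted words, and the $29$ accepting paths showing why that same two-state automaton fails in path-counting mode) are accurate and a helpful elaboration, but not a different argument.
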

		\begin{proof}%2016-06-28
			Let $x=001011$.
			By Theorem \ref{pacific} and
			Theorem \ref{bbbb},
			\[
				h_x^{*\wordc}(\abs{x}-2)\le 2 < 3 = h_x^{*\pathc}(\abs{x} - 2).\qedhere
			\]
		\end{proof}
		Theorem \ref{mainThm} is optimal for $n$, but not for $m$, as
		Theorem \ref{smallThm} indicates.
		\begin{thm}[\computer]\label{cccc}
			Let $x=000010000$. Then
			$h^{*\pathc}_x(1)=5$.
		\end{thm}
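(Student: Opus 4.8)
The plan is to prove the two inequalities $h^{*\pathc}_x(1)\le 5$ and $h^{*\pathc}_x(1)\ge 5$ separately, where $x=000010000$ has length $n=9$. First observe that $\logAcc_9^\pathc(M,2)=\ceil{\log_2\Acc_9^\pathc(M)}\le 1$ is equivalent to $\Acc_9^\pathc(M)\le 2$, since $\ceil{\log_2 t}\le 1$ holds exactly for $t\in\{1,2\}$. Thus $(k,1)\in S^\pathc_x$ means precisely that some NFA with at most $k$ states accepts $x$ while admitting at most two accepting paths of length $9$ in total.

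For the upper bound I would invoke Hyde's Theorem \ref{Hyde}: since $n=9=2\cdot 4+1$ is odd, the Kayleigh graph on $m+1=5$ states accepts $x$ along a unique path, so $\Acc_9^\pathc=1$ and $(5,0)\in S^\pathc_x$. By the upward closure of $S^\pathc_x$ in the second coordinate, $(5,1)\in S^\pathc_x$, whence $h^{*\pathc}_x(1)\le 5$. By the same upward closure (in the first coordinate) it suffices, for the lower bound, to rule out $(4,1)\in S^\pathc_x$: no four-state NFA accepts $x$ with at most two length-$9$ accepting paths.

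For the lower bound I would set up an exhaustive search as follows. By the theorem on $\eps$-transitions we may assume $M$ has none, and we take a single initial state $q_0$ and a single accept state $q_f$, as assumed throughout. Encoding the $0$- and $1$-transitions as Boolean $4\times 4$ matrices $A_0,A_1$ and writing $A=A_0+A_1$, acceptance of $x$ is the condition $e_{q_0}^\top A_0^4 A_1 A_0^4 e_{q_f}\ge 1$, while the quantity to bound below is the total path count $\Acc_9^\pathc(M)=e_{q_0}^\top A^9 e_{q_f}$. To reduce the state-relabelling symmetry I would restrict the accepting path of $x$ to the state-sequence canonical form $s_0\dots s_9$ with $s_0=0$ and $s_i\le\max_{j<i}s_j+1$, enumerate the remaining transitions, and check in every surviving case that $e_{q_0}^\top A^9 e_{q_f}\ge 3$.

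The main obstacle is that there is no evident short, human-verifiable certificate for the lower bound, which is why this is recorded as \computer. The natural pumping heuristic fails: reading the block $0^4$ with only four states forces a repeated state and hence a $0$-cycle, but traversing such a cycle a different number of times changes the word length away from $9$, so it does not by itself produce a second length-$9$ accepting path. The extra paths that force $\Acc_9^\pathc\ge 3$ arise instead from the interaction of the two $0$-blocks with the placement of the single $1$-edge, and establishing that they are unavoidable seems to require the exhaustive case analysis sketched above rather than a single structural lemma.
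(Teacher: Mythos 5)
Your proposal is correct and takes essentially the same route as the paper: the paper records this result only with the annotation ``proved by computer,'' and your argument --- the upper bound $h^{*\pathc}_x(1)\le 5$ from Hyde's Kayleigh graph, plus a lower bound obtained by exhaustively verifying that every $4$-state NFA accepting $x$ has at least $3$ accepting paths of length $9$ --- is precisely that computation, organized via the state-sequence canonical form the paper itself introduces for computational purposes. Your reduction of $\logAcc_9^{\pathc}(M,2)\le 1$ to $\Acc_9^{\pathc}(M)\le 2$ and your use of upward closure to confine the search to exactly four states are both sound.
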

		\begin{thm}\label{smallThm}
			There is a word $x$ such that
			\[
				h^{*\wordc}_x(1)< h^{*\pathc}_x(1).
			\]
		\end{thm}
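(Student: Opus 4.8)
The plan is to reuse the word $x=000010000$ from Theorem \ref{cccc}. That theorem gives $h^{*\pathc}_x(1)=5$, and the inclusion $S^{\pathc}_x\subseteq S^{\wordc}_x$ already yields $h^{*\wordc}_x(1)\le 5$; so the entire task is to prove the strict bound $h^{*\wordc}_x(1)\le 4$. Unwinding the definitions, this amounts to exhibiting a single NFA $M$ with at most four states, $x\in L(M)$, and $\logAcc_{9}^{\wordc}(M,2)\le 1$. Since $\logAcc_{9}^{\wordc}(M,2)=\ceil{\log_2 \Acc_9^{\wordc}(M)}$, this last condition is simply that $M$ accepts at most two words of length $9$. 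Once such an $M$ is displayed, the theorem follows from the chain $h^{*\wordc}_x(1)\le 4<5=h^{*\pathc}_x(1)$.

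Before constructing $M$ it is worth observing that it is forced to be genuinely nondeterministic. Indeed, a deterministic four-state automaton accepting at most two words of length $9$ would accept each of them along a single path, so it would also satisfy $\Acc_9^{\pathc}(M)\le 2$, placing $(4,1)$ in $S^{\pathc}_x$ and contradicting Theorem \ref{cccc}. Hence any witness must accept $x$ along at least three paths that collapse onto at most two words; this path-versus-word collapse is exactly the phenomenon the theorem is meant to exhibit. The natural way to build $M$ is to let the single symbol $1$ of $x$ force a unique ``$1$-transition'' $a\xrightarrow{1}b$, and to realise the surrounding blocks of $0$s by a length-two cycle, so that $x$ is read along several equal-length paths while the set of accepted words stays small.

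The obstacle is that the palindrome $0^4 1 0^4$ is too symmetric for the naive version of this idea. If the pre- and post-blocks are handled by the same period-two cycle, the accepted length-$9$ words are exactly $0^a 1 0^b$ with $a,b$ even and $a+b=8$, namely the three words $001000000$, $000010000$, $000000100$, giving $\Acc_9^{\wordc}(M)=3$ and $\logAcc^{\wordc}_9(M,2)=2$, one too many. The real work is therefore to break this symmetry, so that the reachable ``pre'' lengths and ``post'' lengths form sets $P$ and $Q$ admitting exactly two pairs $(a,b)\in P\times Q$ with $a+b=8$ (for instance $P$ even and $\ge 2$, $Q$ even and $\ge 4$, yielding only $(2,6)$ and $(4,4)$), while simultaneously keeping the accept state reachable and retaining the same-symbol $0$-branch that makes $M$ nondeterministic, all within four states. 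I expect this delicate finite search to be the main difficulty; once a suitable $M$ is exhibited, checking that it accepts precisely two words of length $9$ is a routine finite computation.
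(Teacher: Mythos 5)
Your reduction is exactly the paper's: take $x=000010000$, invoke Theorem \ref{cccc} for $h^{*\pathc}_x(1)=5$, and reduce the claim to exhibiting a four-state NFA that accepts $x$ and at most two words of length $9$. Your observation that any such witness must be genuinely nondeterministic, and your diagnosis that the symmetric ``period-two cycle on both sides of the $1$'' construction fails by accepting three words, are both correct. But the proposal stops at precisely the point where the proof content lies: you never produce the automaton, and you say yourself that the ``delicate finite search'' is the main difficulty and remains to be done. Since everything else in the argument is either a citation of Theorem \ref{cccc} or routine unwinding of definitions, the missing witness \emph{is} the proof, so as written this is a genuine gap.

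The gap is fillable, and the paper fills it with the automaton of Figure \ref{bananaMuffin}: states $q_0,q_1,q_2,q_3$ with $q_0$ initial, $q_2$ accepting, $0$-transitions $q_0\to q_1$, $q_1\to q_2$, $q_2\to q_1$, $q_2\to q_3$, $q_3\to q_3$, and a single $1$-transition $q_3\to q_0$. The lengths $a$ of a $0$-block that can precede the $1$ (i.e., lead from $q_0$ to $q_3$) form the set $\{a : a\ge 3\}$, thanks to the self-loop at $q_3$, while the lengths $b$ that can follow the $1$ and end at $q_2$ are exactly the even numbers $\ge 2$; with $a+b=8$ this forces $b\in\{2,4\}$, so the accepted words of length $9$ are exactly $0^410^4$ and $0^610^2$ (along $1$ and $2$ paths respectively), giving $\logAcc_9^{\wordc}(M,2)=1$ and $h^{*\wordc}_x(1)\le 4$. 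Note that this realizes your ``exactly two pairs in $P\times Q$'' idea, but with $P=\{a\ge 3\}$ rather than your suggested $P$ of even numbers $\ge 2$ and $Q$ of even numbers $\ge 4$: enforcing ``even and $\ge 4$'' would cost more than the four states available, so the asymmetry has to take the form ``anything $\ge 3$'' versus ``even,'' which a self-loop provides for free.
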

		\begin{proof}
			Let
			$x=000010000$.
			By Figure \ref{bananaMuffin}, $h^{*\pathc}_x(1)\le 4$. Hence by Theorem \ref{cccc},
			\[
				h^{*\pathc}_x(1)\le 4<5=h^{*\wordc}_x(1).\qedhere
			\]
		\end{proof}
		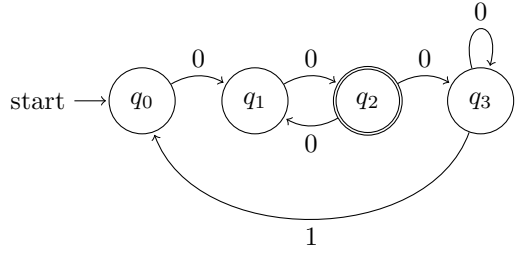
\begin{figure}%[h]
			\centering
			\begin{tikzpicture}[shorten >=1pt,node distance=1.5cm,on grid,auto]
				\node[state, initial] (q_0) {$q_0$};
				\node[state] (q_1) [right=of q_0] {$q_1$};
				\node[state, accepting] (q_2) [right=of q_1] {$q_2$};
				\node[state] (q_3) [right=of q_2] {$q_3$};
				\path[->]
					(q_0) edge [bend left]  node {$0$} (q_1)
					(q_1) edge [bend left]  node {$0$} (q_2)
					(q_3) edge [loop above] node {$0$} (q_3)
					(q_2) edge [bend left]  node {$0$} (q_3)
					(q_2) edge [bend left]  node {$0$} (q_1)
					(q_3) edge [bend left=70]  node {$1$} (q_0);
			\end{tikzpicture}
			\caption{
				An automaton accepting $x=0^410^4$ along one path and $0^610^2$ along two paths, giving Theorem \ref{smallThm}.
			}
			\label{bananaMuffin}
		\end{figure}

	\section{Model selection}
		The automatic structure functions are intended to provide statistical explanations for words. The best explanation for a word $x$ is the automaton witnessing a value of the structure function that is unusually low, compared to other words $y$. It turns out that the phenomenon of Theorem \ref{mainThm} also applies to such best explanations.

		As envisioned by Kolmogorov, structure functions have potential applications in computational statistics.
		We now describe concrete results of our foray into model selection with structure functions for automatic complexity.
		\begin{df}
			Let $X$ be a uniform random variable on $[b]^n$.
			The $b$-ary $p$-value achieved by an NFA $M$ at a length $n$ in mode $\mu$ is the probability that
			$X$ is accepted by some NFA $N$ such that
			$N$ has no more states than $M$, and $\logAcc^\mu_n(N, b)\le \logAcc^\mu_n(M, b)$.
		\end{df}
		\begin{df}
			An NFA $M$ is an \emph{optimal $b$-ary model for $x$ in mode $\mu$} if
			$M$ accepts $x$ and
			$M$ achieves the minimal $b$-ary $p$-value at length $\abs{x}$ in mode $\mu$ among all NFAs that accept $x$.
			The \emph{$b$-ary explanation of $x$ in mode $\mu$} is the set of all optimal $b$-ary models for $x$ in mode $\mu$.
		\end{df}
		Often, we take $b$ to be the least integer such that $x$ is a word in the alphabet $[b]$.
		For binary words, we usually take $b=2$, even in the case of the word $0^n$.

		\begin{thm}[\computer]\label{eeee}
			Let $x=01111011011$.
			In both the path-counting mode and the deterministic mode the optimal number of states for $x$ is 3.
			The only optimal state sequence for $x$ in the path-counting mode is 012120120120, giving $m=2$ and $p$-value $0.04$.
			The only optimal state sequence for $x$ in the deterministic mode is 012020120120, giving $m=4$ and $p$-value $0.30$.
		\end{thm}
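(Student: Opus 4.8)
The plan is to establish this \computer\ result by an exhaustive search that follows the definitions of $p$-value and structure function literally. The object to compute first is, for each mode $\mu\in\{\delta,\pathc\}$ and each pair $(q,m)$, the quantity
\[
p^\mu(q,m)=2^{-11}\,\abs{\{\,y\in\{0,1\}^{11}:(q,m)\in S^\mu_y\,\}},
\]
because, by definition, the $2$-ary $p$-value of any NFA $M$ with $q$ states and $\logAcc^\mu_{11}(M,2)=m$ is precisely $p^\mu(q,m)$. To fill this table I would iterate over all $2^{11}=2048$ words $y$ and, for each, enumerate the automata accepting it via the state-sequence representation: a restricted-growth string $s_0\dots s_{11}$ with $s_i\le\max_{j<i}s_j+1$ determines the automaton with transition set $\{(s_{i-1},y_i,s_i):1\le i\le 11\}$, initial state $s_0$, and single accept state $s_{11}$. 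The reduction that makes this enumeration legitimate, and which I would justify first, is that adjoining any transition beyond those forced by an accepting path never lowers the state count and never lowers $\Acc^\mu$; consequently every Pareto-optimal pair $(q,h^\mu_y(q))$ is realized by an automaton induced by a single accepting path, so ranging over state sequences captures all of $S^\mu_y$.

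For each induced automaton I would read off the number $q$ of distinct states and the acceptance count at length $11$. In mode $\pathc$ this count is the entry $(A^{11})_{s_0,\,s_{11}}$, where $A_{ij}$ is the number of symbols carrying a transition from state $i$ to state $j$; this tallies all length-$11$ walks from the start to the accept state, irrespective of the word they spell. In mode $\delta$ I would first reject any state sequence whose induced transition relation is nondeterministic, and then count accepted words, which for a partial deterministic automaton coincides with the walk count and is again a matrix power. Setting $m=\ceil{\log_2\Acc^\mu}$, taking upward closure in $(q,m)$, and accumulating over all $y$ produces the two tables $p^\delta$ and $p^\pathc$.

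With the tables computed, the word $x=01111011011$ is handled by the very same enumeration: I would run over all state sequences of $x$, attach to each its pair $(q,m)$ and the looked-up value $p^\mu(q,m)$, and extract the minimum and its set of minimizers. It then remains to verify, in each mode, that the minimum is attained only at $q=3$, that the unique minimizing state sequence is 012120120120 (mode $\pathc$) respectively 012020120120 (mode $\delta$), and that the attained $(m,p^\mu)$ equals $(2,0.04)$ respectively $(4,0.30)$. Monotonicity of $p^\mu(q,m)$ in both arguments makes the comparison over $q$ short: with too few states the forced $m$ is large, while with enough states (already $q=6$ by Theorem \ref{Hyde}) one reaches $m=0$ but then $p^\pathc=1$; so only the few frontier points $(q,h^\mu_x(q))$ need be weighed against one another.

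The main obstacle I anticipate is not any isolated computation but the faithfulness of the global enumeration on which the whole table rests: one must enumerate automata for all $2048$ words without omitting a nondeterministic optimum and without double counting, and compute the path-count acceptance correctly for genuinely nondeterministic induced automata. A secondary delicacy is the uniqueness assertion, which is sensitive both to the rounding convention for the reported $p$-values and to whether optimality is meant strictly; I would therefore carry the exact integer cardinalities $\abs{\{y:(q,m)\in S^\mu_y\}}$ through every comparison and round only when displaying the final figures $0.04$ and $0.30$.
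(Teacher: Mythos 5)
Your proposal is correct and is essentially the computation the paper intends: the theorem is only asserted as ``proved by computer,'' and the paper's own computational framework (enumerating path-induced automata via restricted-growth state sequences, with the $p$-value depending only on the pair $(q,m)$ through the upward-closed sets $S^\mu_y$) is exactly what you describe, including the key reduction that deleting off-path transitions never increases $\Acc^\mu$ in any mode. Your spot-checks are consistent with the stated figures (the sequence 012120120120 yields $4$ accepting paths, so $m=2$, and 012020120120 yields a partial DFA accepting $9$ words, so $m=4$), and your caution about carrying exact counts rather than rounded $p$-values is the right way to secure the uniqueness claims.
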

		Theorem \ref{eeee} immediately gives an interesting corollary.
		\begin{thm}\label{stats}
			There is an $x$ such that
			the explanation of $x$ in deterministic mode and
			the explanation of $x$ in path-counting mode
			are disjoint.
		\end{thm}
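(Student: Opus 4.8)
The plan is to read Theorem~\ref{stats} off directly from Theorem~\ref{eeee}, taking the witness $x = 01111011011$. That theorem already does the substantive work: it asserts that in the path-counting mode the \emph{only} optimal state sequence for $x$ is 012120120120, while in the deterministic mode the \emph{only} optimal state sequence is 012020120120. These two sequences are literally different (they disagree at the fourth position, $1$ versus $0$), so the remaining content of Theorem~\ref{stats} is purely the bookkeeping that lets us pass from ``optimal state sequence'' to ``optimal model,'' i.e.\ from a single distinguished state sequence to the set of NFAs forming each explanation.

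First I would invoke the reduction noted earlier in the excerpt: for questions of automatic complexity an automaton may be replaced by the state sequence it traverses on $x$, and in particular an optimal model for $x$ in a given mode must realize an optimal state sequence for that mode while reading $x$. Combined with the uniqueness clauses of Theorem~\ref{eeee}, this says that every member of the deterministic explanation of $x$ induces the sequence 012020120120 on $x$, and every member of the path-counting explanation induces the sequence 012120120120 on $x$.

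Then I would establish disjointness by contradiction. Suppose some NFA $M$ belonged to both explanations. Being in the deterministic explanation, $M$ is deterministic, so it has exactly one computation on $x$, and by uniqueness that computation must be 012020120120. Being in the path-counting explanation, $M$ must (again by uniqueness) realize 012120120120 on $x$. But a deterministic automaton admits only one path on a given input, so these two state sequences would have to coincide, which they do not. This contradiction shows that no automaton lies in both explanations; since Theorem~\ref{eeee} exhibits an optimal model in each mode, both explanations are nonempty, and hence they are disjoint.

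The one genuinely delicate point, which I would take care to state cleanly, is the passage from uniqueness of the optimal state sequence to a statement about the \emph{set} of optimal automata: one must make sure that ``the only optimal state sequence'' forces every optimal model to traverse exactly that sequence on $x$, rather than merely to contain it among several accepting paths. On the deterministic side this is free, since determinism forbids multiple paths. On the path-counting side it is exactly the intended reading of Theorem~\ref{eeee}; and it is worth remarking that the path-optimal automaton is genuinely nondeterministic (from state $2$ the symbol $1$ leads both to state $0$ and to state $1$), so it could never have been deterministic and thus could not have entered the deterministic explanation in the first place.
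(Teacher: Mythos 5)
Your proposal is correct and follows the paper's own route exactly: the paper derives Theorem~\ref{stats} as an immediate corollary of Theorem~\ref{eeee} using the same witness $x=01111011011$ and the fact that the two unique optimal state sequences differ. Your additional care about passing from unique optimal state sequences to disjointness of the sets of optimal automata (and the observation that the path-optimal automaton is genuinely nondeterministic) only makes explicit what the paper leaves implicit.
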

		See Figure \ref{optimal-nondet} for illustration of Theorems \ref{eeee} and \ref{stats}.
		\begin{figure}%[h]
			\centering
			\begin{tikzpicture}[shorten >=1pt,node distance=1.5cm,on grid,auto]
				\node[state, initial, accepting] (q_0)                {$q_0$}; 
				\node[state]                     (q_1) [right=of q_0] {$q_1$}; 
				\node[state]                     (q_2) [right=of q_1] {$q_2$}; 
				\node (q_3) [right=of q_2]{};
				\path[->] 
					(q_0) edge                node {0} (q_1)
					(q_1) edge [bend left]    node {1} (q_2)
					(q_2) edge [bend left]    node {1} (q_1)
					(q_2) edge [bend left=70] node {1} (q_0);
				\node[state, initial, accepting] (r_0) [right=of q_3] {$r_0$}; 
				\node[state]                     (r_1) [right=of r_0] {$r_1$}; 
				\node[state]                     (r_2) [right=of r_1] {$r_2$}; 
				\path[->] 
					(r_0) edge                node {0} (r_1)
					(r_1) edge                node {1} (r_2)
					(r_2) edge [bend left=50] node {1} (r_0)
					(r_0) edge [bend left=50] node {1} (r_2);
			\end{tikzpicture}
			\caption{
				Optimal models for 01111011011 in the path-counting (left) and deterministic (right) modes (Theorem \ref{stats}).
			}
			\label{optimal-nondet}\label{optimal-det}
		\end{figure}
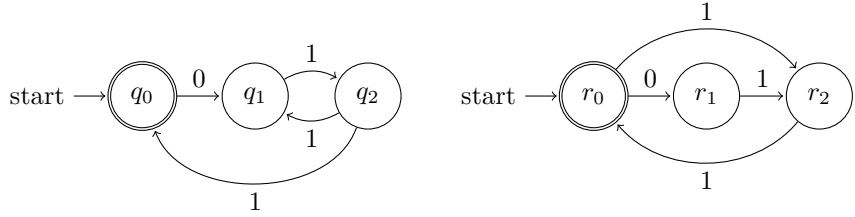
		\begin{thm}[\computer]\label{ffff}
			The optimal number of states for $x=0110001000$ in the path-counting mode is 4, corresponding to $m=2$ and a $p$-value of 0.79.
			The optimal number of states for $x=0110001000$ in the word-counting mode is 2, corresponding to $m=7$ and a $p$-value of 0.6.
		\end{thm}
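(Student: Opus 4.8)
This result is \computer; the plan is to reduce both claims to a finite computation and to describe the search that carries it out. The first step is to reinterpret the $p$-value combinatorially. By the definition of the $p$-value and of $S^\mu_y$, a word $y\in\{0,1\}^{10}$ contributes to the $p$-value of a model $M$ with $q$ states and $m=\logAcc_{10}^\mu(M,2)$ exactly when $(q,m)\in S^\mu_y$, i.e.\ when $h^{*\mu}_y(m)\le q$. Hence, writing $n=10$,
\[
	p\text{-value}(M)=2^{-n}\,\bigl|\{\,y\in\{0,1\}^{n}: h^{*\mu}_y(m)\le q\,\}\bigr|.
\]
So everything reduces to tabulating $h^{*\mu}_y(m)$ for all $2^{10}$ words $y$, in both modes $\mu\in\{\pathc,\wordc\}$.

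Second, I would compute this table by enumerating automata through their state sequences, as suggested earlier in the excerpt. For a fixed $y$ of length $n$, every NFA accepting $y$ must contain the transitions induced by some accepting run $s_0\dots s_n$, and adding any further transitions can only increase both $\Acc_n^{\pathc}$ and $\Acc_n^{\wordc}$ while never lowering the state count; thus it suffices to consider, for each restricted-growth state sequence (those with $s_0=0$ and $s_i\le\max_{j<i}s_j+1$, accept state $s_n$), the automaton whose transitions are exactly $\{(s_{i-1},x_i,s_i)\}$. For such an automaton I would read off $q=\max_i s_i+1$, compute the path count $\Acc_n^{\pathc}$ as the entry $(0,s_n)$ of $T^{n}$ where $T$ is the total transition matrix, and compute the word count $\Acc_n^{\wordc}=\abs{L(M)\cap\{0,1\}^n}$ by subset construction; each run then contributes the point $(q,\ceil{\log_2\Acc_n^\mu})$ to $S_y^\mu$, and $h^{*\mu}_y(m)$ is the least $q$ among points with second coordinate at most $m$.

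Third, with the table in hand I would run the same enumeration on the single word $x=0110001000$ to list all candidate models accepting $x$, that is, all pairs $(q,m)$ realized by a state sequence of $x$ in each mode, compute each candidate's $p$-value from the formula above, and minimize. This should exhibit the path-counting optimum at $(q,m)=(4,2)$ with value $0.79$ and the word-counting optimum at $(q,m)=(2,7)$ with value $0.6$, and, crucially, confirm that no model accepting $x$ with a different number of states does better in each mode, which is what pins down the optimal state counts $4$ and $2$.

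The main obstacle is not the running time, since the search space is bounded by the Bell number $B_{11}$ of state sequences per word times the $2^{10}$ words, which is easily feasible, but rather the correctness of the reduction. Two points require care: the completeness claim that induced automata from accepting runs suffice (that is, the monotonicity of $\Acc_n^\mu$ under adding transitions, and the fact that extra states never help), and the bookkeeping that separates the two modes, since path counts come from matrix powers of $T$ whereas word counts require genuine determinization. It is precisely this gap that lets a $2$-state automaton be word-optimal while a $4$-state automaton is path-optimal. A secondary point is that the reported values $0.79$ and $0.6$ are rounded, so the verification must compare the exact rational $p$-values of competing models rather than their truncations.
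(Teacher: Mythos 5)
Your proposal is correct and matches the paper's approach: Theorem \ref{ffff} carries no written proof beyond the tag ``\computer,'' and your reduction of the $p$-value to $2^{-n}\abs{\{y\in\{0,1\}^{n}: h^{*\mu}_y(m)\le q\}}$, together with the state-sequence enumeration justified by monotonicity of $\Acc_n^\mu$ under adding transitions and states, is exactly the finite search such a verification requires. Your closing cautions (exact rational comparison rather than the rounded values $0.79$ and $0.6$, and genuine determinization for word counts versus matrix powers for path counts) are the right ones.
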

		The corollary we seek is now immediate from Theorem \ref{ffff}.
		\begin{thm}\label{word-stats}
			There is an $x$ such that
			the explanation of $x$ in word-counting mode and
			the explanation of $x$ in path-counting mode
			are disjoint.
		\end{thm}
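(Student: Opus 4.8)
The plan is to take $x = 0110001000$ and read off the conclusion from Theorem \ref{ffff}, whose computer-verified content already isolates the only fact we need: the \emph{number of states} of an optimal model differs between the two modes.

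First I would unwind the definition of an explanation. The explanation of $x$ in mode $\mu$ is the set of all NFAs that accept $x$ and achieve the minimal $p$-value at length $\abs{x}$ in mode $\mu$; so every member of the path-counting explanation is an optimal path-counting model, and every member of the word-counting explanation is an optimal word-counting model. Theorem \ref{ffff} tells us that, for this $x$, an optimal path-counting model has $4$ states (with $m = 2$ and $p$-value $0.79$), whereas an optimal word-counting model has $2$ states (with $m = 7$ and $p$-value $0.6$).

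Then the disjointness is immediate: if some NFA $M$ belonged to both explanations, it would be simultaneously an optimal path-counting model and an optimal word-counting model, forcing $M$ to have both $4$ and $2$ states, which is absurd. Hence the two explanations share no automaton.

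There is no genuine obstacle once Theorem \ref{ffff} is granted; the proof is a one-line deduction from the mismatch $4 \ne 2$. The only point deserving care is the reading of the phrase ``the optimal number of states,'' which I would confirm means that every $p$-value-minimizing model has that state count, not merely that some such model exists --- a reading supported by the monotonicity of the $p$-value in the number of states, since enlarging the state budget of $M$ (or its value of $m$) can only enlarge the class of competitors $N$ and hence cannot decrease the $p$-value.
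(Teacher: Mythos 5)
Your proposal is correct and matches the paper's own argument, which likewise derives the disjointness immediately from Theorem \ref{ffff} via the mismatch in the optimal state counts ($4$ versus $2$). The extra care you take in reading ``the optimal number of states'' as applying to every $p$-value-minimizing model is a reasonable elaboration of what the paper leaves implicit.
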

		 See Figure \ref{optimal-word} for an illustration of Theorems \ref{ffff} and \ref{word-stats}.

		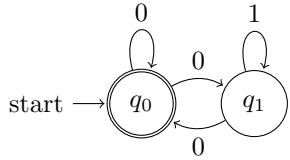
\begin{figure}%[h]
			\centering
			\begin{tikzpicture}[shorten >=1pt,node distance=1.5cm,on grid,auto]
				\node[state, initial, accepting] (q_0)                {$q_0$};
				\node[state]                     (q_1) [right=of q_0] {$q_1$}; 
				\path[->] 
					(q_0)     edge [loop above] node           {0} ()
					(q_0)     edge [bend left]  node           {0} (q_1)
					(q_1)     edge [bend left]  node           {0} (q_0)
					(q_1)     edge [loop above] node           {1} ();
			\end{tikzpicture}
			\caption{
				An optimal model for $x=0110001000$ in the word-counting mode (Theorem \ref{word-stats}).
				Note the use of multiple paths.
			}
			\label{optimal-word}
		\end{figure}

	\section{Determinism and automatic complexity}
		In \cite{Kjos-SIAM} we give an example of a word $x$ such that $A^-(x)-A_N^{\pathc}(x)=2$.
		We conjecture that the differences $A^-(x)-A_N^{\pathc}(y)$
		are unbounded as $\abs{y}\to\infty$.
		However, for most words, the difference between $A^-$ and $A_N^{\pathc}$ is small.
		Let $[b]=\{0,\dots,b-1\}$ and let $\abs{X}$ denote the cardinality of a set $X$.
		We show that most words have $A^-$-complexity at most $(\frac12+\eps)n$ in the following sense.
		\begin{thm}%[September 16, 2015]
			%For most $b$-ary words $x$, $A(x)\le \left(\frac12+\frac1{2b}\right)n $.
			For each $\eps>0$ and integer $b\ge 1$,
			\[
				\lim_{n\to\infty}
					b^{-n}
					\left|\left\{x\in [b]^n : \frac{A^-(x)}n \le \frac12+\frac1{2b}+\eps\right\}\right|
				= 1.
			\]
		\end{thm}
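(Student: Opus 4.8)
The plan is to prove the upper bound by exhibiting, for a typical $x\in[b]^n$, a small nontotal deterministic automaton accepting $x$ and no other word of length $n$, and then to show via concentration of measure that ``typical'' means ``all but a vanishing fraction.'' I would start from the Kayleigh graph of Figure~\ref{fig1} (Definition~\ref{df:KayleighGraph}), which by Hyde's Theorem~\ref{Hyde} accepts $x$ along a single path of length $n$, hence accepts only $x$ among words of length $n$, using $m+1=\floor{n/2}+1$ states. This automaton is in general nondeterministic: at the state $q_i$ the forward edge leaves on the symbol $x_i$ while the return edge leaves on $x_{n-i+2}$ (and at $q_{m+1}$ the self-loop reads $x_{m+1}$ while the return edge reads $x_{m+2}$), so a \emph{conflict} occurs exactly when these two symbols coincide.

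For each conflicting state I would split off a fresh copy $q_i'$ to be used by the return pass, rerouting the two backward edges incident to $q_i$ so that the forward pass runs through $q_i$ and the backward pass through $q_i'$. Since the two symbols at $q_i$ now leave distinct states, this removes the conflict; because a split only redirects those two return edges it creates no new conflict and does not cascade, so the result is a genuine nontotal DFA with exactly one extra state per conflict. Determinism and the fact that $x$ is still accepted are then immediate. For uniqueness I would use the label-preserving map sending each $q_i'$ back to $q_i$: it carries every accepting walk of the split automaton to an accepting walk of the original Kayleigh graph reading the same word, so the language of the DFA is contained in that of the Kayleigh NFA, whose only word of length $n$ is $x$. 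Consequently $A^-(x)\le (m+1)+C(x)$, where $C(x)$ is the number of conflicting states.

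It remains to control $C(x)$ for random $x$. The conflicts are indexed by the position pairs $\{i,\,n-i+2\}$ for $i=2,\dots,m+1$; these pairs are pairwise disjoint and partition $\{2,\dots,n\}$, so the indicators $\mathbf{1}[x_i=x_{n-i+2}]$ are independent with success probability $1/b$. Hence $C(x)$ is Binomial with parameters $m$ and $1/b$, of mean $m/b\approx n/(2b)$. A Chernoff bound then shows, for fixed $\eps>0$, that the probability that $C(x)>(\tfrac1b+\eps)m$ tends to $0$, so that with probability tending to $1$,
\[
	A^-(x)\le (m+1)+\pars{\tfrac1b+\eps}m \le \pars{\tfrac12+\tfrac1{2b}+\eps}n
\]
for all large $n$. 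Treating the even-$n$ case by Hyde's ``slight modification'' in the analogous way completes the argument.

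I expect the main obstacle to be bookkeeping rather than conceptual: writing the split construction so that it is unambiguously deterministic for every arrangement of conflicts (in particular for several adjacent conflicts, where the backward path threads through a mix of primed and unprimed states) and checking the label-preserving projection in that generality. The probabilistic step is routine once the conflicts are seen to live on disjoint coordinate pairs, and the constant $\tfrac12+\tfrac1{2b}$ then falls out exactly as $\tfrac12$ (from the $m+1$ states of the Kayleigh graph) plus $\tfrac1{2b}$ (from the expected density of coinciding pairs).
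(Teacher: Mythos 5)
Your proposal is correct and follows essentially the same route as the paper's own proof sketch: determinize the Kayleigh graph by splitting each state with conflicting out-labels into two, observe that conflicts occur on disjoint coordinate pairs with probability $1/b$ each, and apply a concentration bound (the paper invokes the Law of Large Numbers where you invoke Chernoff) to get the $\pars{\frac12+\frac1{2b}+\eps}n$ bound for almost all $x$. You fill in more of the bookkeeping (the projection argument for uniqueness and the independence of the conflict indicators) than the paper's sketch does, but the underlying idea is identical.
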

		\begin{proof}[Proof sketch.]
			The idea is \emph{derandomization}, or perhaps more accurately \emph{determinization}, of Kayleigh graphs (Figure \ref{fig1}).
			Whenever there is a state with nondeterministic out-behavior, split
			it into two states as in Figure \ref{fig1explode}. This will only happen about a fraction $\frac1b$ of the time,
			so the total number of states will be about
			\[
				\frac{n}2 + \frac{n}2\cdot\frac1b = \left(\frac12+\frac1{2b}\right)n. %\qedhere
			\]
			By the Law of Large Numbers, the statement of the Theorem follows.
		\end{proof}

		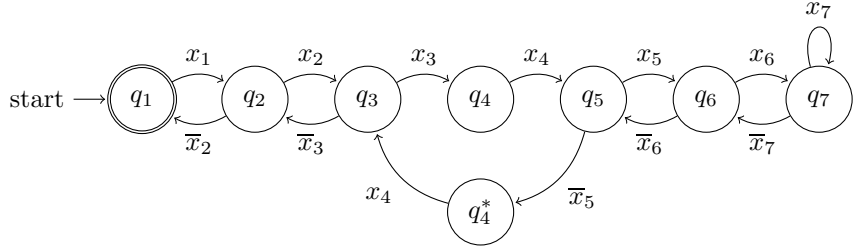
\begin{figure}%[h]
			\begin{tikzpicture}[shorten >=1pt,node distance=1.5cm,on grid,auto]
				\node[state,initial, accepting] (q_1)   {$q_1$}; 
				\node[state] (q_2)     [right=of q_1   ] {$q_2$}; 
				\node[state] (q_3)     [right=of q_2   ] {$q_3$}; 
				\node[state] (q_41)     [right=of q_3   ] {$q_{4}$};
				\node[state] (q_42)    [below=of q_41   ] {$q_{4}^*$};
				\node[state] (q_5)     [right=of q_4   ] {$q_5$};
				\node[state] (q_6)     [right=of q_5   ] {$q_6$};
				\node[state] (q_7) [right=of q_6   ] {$q_7$}; 
				\path[->] 
					(q_1)     edge [bend left]  node           {$x_1$}     (q_2)
					(q_2)     edge [bend left]  node           {$x_2$}     (q_3)
					(q_3)     edge [bend left]  node           {$x_3$}     (q_41)
					(q_41)    edge [bend left]  node  {$x_4$}     (q_5)
					(q_5)     edge [bend left]  node   {$x_{5}$} (q_6)
					(q_6)     edge [bend left]  node  {$x_6$}     (q_7)
					(q_7)     edge [loop above] node           {$x_{7}$} ()
					(q_7)     edge [bend left]  node  {$\overline x_{7}$} (q_6)
					(q_6)     edge [bend left]  node  {$\overline x_{6}$} (q_5)
					(q_5)     edge [bend left]  node  {$\overline x_{5}$} (q_42)
					(q_42)    edge [bend left]  node           {$x_{4}$} (q_3)
					(q_3)     edge [bend left]  node           {$\overline x_{3}$} (q_2)
					(q_2)     edge [bend left]  node           {$\overline x_{2}$}     (q_1);
			\end{tikzpicture}
			\caption{
				A deterministic finite automaton that only accepts one word
				$x= x_1x_2x_3x_4x_5x_6x_7\overline x_7\overline x_6\overline x_5 x_4\overline x_3\overline x_2$ of length $n = 13$.
				It is obtained by ``exploding'' the state $q_4$ in a Kayleigh graph (Figure \ref{fig1}).
			}
			\label{fig1explode}
		\end{figure}

		We also know that $A^{\pathc}_N$ and $A^{\wordc}_N$ have the same sharp upper bound. The argument in \cite{Kjos-EJC},
		to the effect that $n/2+1$ is sharp,
		applies to them equally.
	\section{Automatic complexity of doubletons}
		\begin{df}
			The word-based automatic complexity $A^{\wordc}_N(\F)$ of
			a finite set $\F\subseteq\{0,1\}^n$ to be the minimum number of states of an NFA $M$ such that
			\[
				L(M)\cap\{0,1\}^n=\F.
			\]
			The path-based automatic complexity $A^{\pathc}_N(\F)$ is the minimum number of states of an NFA $M$ such that
			in addition $M$ has only $|\F|$ many accepting paths of length $n$.
		\end{df}

		This generalizes automatic complexity from the case where $\F$ is a singleton.
		Clearly $A^{\wordc}_N \le A^{\pathc}_N$. We shall see in Theorem \ref{chokoUndKeks} that
		$A^{\wordc}_N(\F) \ne A^{\pathc}_N(\F)$ in general when $|\F|=2$.
		We conjectured in Conjecture \ref{mainConj} that
		$A^{\wordc}_N(\F) \ne A^{\pathc}_N(\F)$ for some $\F$ with $|\F|=1$.

		\begin{figure}
			\centering
			\begin{tikzpicture}[->,>=stealth',shorten >=1pt,auto,node distance=2.8cm,
			                    semithick]
			  \tikzstyle{every state}=[draw=black,text=black]%,fill=red,draw=none,text=white

			  \node[state,accepting] (A)              {};
			  \node[state]           (B) [above of=A] {};
			  \node[initial,state]   (D) [below of=A] {};
			  \node[state]           (C) [right of=A] {};
			  \node[state]           (E) [above of=C] {};

			  \path (A) edge [bend left]  node {$x_2$} (B)
			        (B) edge [loop above] node {$x_3$} (B)
			            edge [bend left]  node {$x_4$} (A)
			        (C) edge              node {$y_2$} (E)
			        (D) edge              node {$x_1$} (A)
			            edge              node {$y_1$} (C)
			        (E) edge [loop above] node {$y_3$} (E)
			            edge              node {$y_4$} (A);
			\end{tikzpicture}
			\caption{An automaton witnessing the Chambers--Hyde bound (Theorem \ref{thermalbad}) for $n=4$ and $f=2$.}\label{chambers--hyde}
		\end{figure}
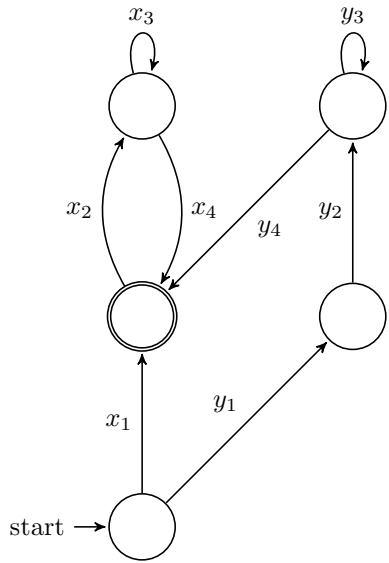
		\begin{thm}[Hyde \cite{Kjos-EJC} ($f=1$), Chambers \cite{Chambers} ($f=2$)]\label{thermalbad}
			The automatic complexity of a set $\F\subseteq \{0,1\}^n$
			(with one accept state allowed) of size $f$ satisfies
			\[
				A^{\pathc}_N(\F)\le f\lfloor n/2\rfloor + 1.
			\]
		\end{thm}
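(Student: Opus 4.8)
The plan is to generalize Hyde's Kayleigh-graph construction, which handles the case $f=1$ in Theorem \ref{Hyde}, by gluing together $f$ Kayleigh-type ``lobes'' that share states. First I would isolate the one combinatorial fact that makes a single Kayleigh graph work, stated purely for the underlying directed multigraph and ignoring edge labels: in the graph of Figure \ref{fig1} on states $q_1,\dots,q_{m+1}$ with $n=2m+1$, there is exactly one walk of length $n$ from $q_1$ back to $q_1$, namely straight up the chain, one loop at $q_{m+1}$, then straight back down. This is a short count: viewing a closed walk as a sequence of $+1$, $-1$, and loop steps on the line $\{1,\dots,m+1\}$, a return walk of length $2m+1$ must use $m$ up-steps, $m$ down-steps, and exactly one loop, and since the loop exists only at the top vertex, reaching it with just $m$ up-steps forces the monotone up--loop--down pattern. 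Because this count is label-independent, the unique length-$n$ accepting walk necessarily spells $x$, which reproves Theorem \ref{Hyde} and supplies the building block.

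For a set $\F=\{w^{(1)},\dots,w^{(f)}\}$ with $n=2m+1$ odd, I would take one shared state $q^{*}$, declared both initial and accepting, and attach $f$ otherwise disjoint chains of $m$ fresh states, the $i$-th chain being the Kayleigh graph for $w^{(i)}$ with $q^{*}$ playing the role of $q_1$. This uses $1+fm=f\lfloor n/2\rfloor+1$ states. The verification splits in two. Each $w^{(i)}$ is accepted along its own up--loop--down traversal, giving at least $f$ accepting paths. Conversely, any length-$n$ walk from $q^{*}$ to $q^{*}$ decomposes into excursions between successive visits to $q^{*}$, each excursion confined to a single lobe since the lobes meet only at $q^{*}$. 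A loop-using excursion has length at least $n$ by the building-block analysis, while a loopless excursion has even length. As $n$ is odd, some excursion has odd length, hence uses a loop, hence has length $\ge n$ and is the whole walk. Thus every length-$n$ accepting walk is a single full lobe traversal, so $\Acc_n^{\pathc}$ equals $f$ and $L(M)\cap\{0,1\}^n=\F$.

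For even $n=2m$ I would use the asymmetric variant of Figure \ref{chambers--hyde}: keep a shared accepting state $A$ and a shared initial state $D$, send each word out of $D$ along a first edge reading $w^{(i)}_1$, and complete it by an odd-length ($2m-1$) Kayleigh round trip reading $w^{(i)}_2\cdots w^{(i)}_n$. Reusing $A$ itself as the entry vertex of one distinguished lobe saves exactly one state, so the total is $(m-1)+(f-1)m+2=fm+1=f\lfloor n/2\rfloor+1$. The counting argument is the same, now with a parity twist: the prefix edge reduces the remaining length to the odd value $2m-1$, and the up--loop--down dichotomy again forces each completing walk to be a single loop-using traversal.

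The step I expect to be the real obstacle is the converse count, i.e.\ ruling out \emph{extra} length-$n$ accepting walks. Everything rests on the parity dichotomy that loopless excursions are even while loop-using ones are long, and this must be preserved in the even-$n$ construction, where $A$ is simultaneously the accepting state and the entry vertex of the distinguished lobe and so can be revisited mid-walk. The careful point is to show that a walk through a non-distinguished lobe cannot reach $A$ early, detour through the distinguished lobe, and return to $A$ at step $n$: writing the first-arrival time at $A$ as $t$, the parity argument forces $t$ even, whence the detour has odd length $2m-1-t$, which then forces a loop and a detour of length $\ge 2m-1$, an impossibility. Once this exclusion is made rigorous, the state count and the existence of the $f$ intended paths are routine.
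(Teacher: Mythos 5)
Your proposal is correct and follows essentially the same route as the paper, which gives only a proof sketch pointing to Figures \ref{fig1} and \ref{chambers--hyde}: glue $f$ Kayleigh-type lobes at a shared initial/accepting state for odd $n$ (respectively a shared initial state $D$ and accepting state $A$ for even $n$), and use the parity of loopless excursions versus the length of loop-using ones to rule out stray accepting walks. One small imprecision: for even $n$ the non-distinguished lobes are not literal round trips --- the final descent edge must land at $A$ rather than return to the lobe's own entry vertex (exactly as the edge labeled $y_4$ into $A$ does in Figure \ref{chambers--hyde}) --- but your excursion analysis already presupposes this redirection, so the argument stands as intended.
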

		\begin{proof}[Proof sketch.]
			The proof is easy upon consideration of Figures \ref{fig1} and \ref{chambers--hyde}.
		\end{proof}

		\begin{figure}%[h]
			\centering
			\begin{tikzpicture}[shorten >=1pt,node distance=1.5cm,on grid,auto]
				\node[state, initial] (q_1)   {$q_0$}; 
				\node[state] (q_2)     [right=of q_1   ] {$q_1$}; 
				\node[state, accepting] (q_3)     [right=of q_2   ] {$q_2$}; 
				\path[->] 
					(q_1)     edge [bend left]  node           {1}     (q_2)
					(q_2)     edge [bend left]  node           {1}     (q_3)
					(q_3)     edge [bend left]  node           {1}     (q_2)
					(q_2)     edge [bend left]  node           {1}     (q_1)
					(q_1)     edge [bend left=70]  node           {0}     (q_3);
			\end{tikzpicture}
			\caption{
				The optimal automaton $C$ with $L(C)\cap\{0,1\}^4=\{0110,1111\}$ by necessity accepts 1111 along two paths,
				giving Theorem \ref{chokoUndKeks}.
			}
			\label{hot}
		\end{figure}
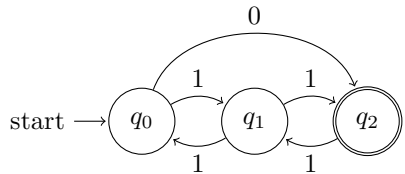
		
		\begin{thm}[\computer]\label{dddd}
			Let $\F=\{0110, 1111\}$. For any NFA $M$ such that $L(M)\cap \{0,1\}^4=\F$, $M$ has at least 3 states, and
			if $M$ has 3 states then $M$ has at least 3 accepting paths of length 4.
		\end{thm}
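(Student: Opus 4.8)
The plan is to convert the statement into a finite search over small automata and to organize that search so that the structural reasons behind both assertions become visible. Since only the behavior at length $4$ matters and we are promised a single initial state and (by our standing convention) a single accept state, an NFA on a fixed state set $\{1,\dots,k\}$ with initial state $1$ is completely described by its two transition relations, which I encode as $0/1$ adjacency matrices $M_0, M_1$ together with a choice of accept state $f$. For a word $w=w_1w_2w_3w_4$ the number of accepting paths of length $4$ is exactly the entry $(M_{w_4}M_{w_3}M_{w_2}M_{w_1})_{1,f}$, so the requirement $L(M)\cap\{0,1\}^4=\F$ becomes: this entry is $\ge 1$ for $w\in\{0110,1111\}$ and $=0$ for the other $14$ words of length $4$. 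Both assertions of the theorem are now statements about such matrix pairs.

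For the first assertion I would rule out $k\le 2$. A one-state automaton accepts either all or none of the length-$4$ words compatible with its self-loops, so it can never accept exactly two words. For $k=2$ I would track the reachable-set dynamics: each symbol $a$ induces a monotone map $T_a$ on the four-element subset lattice of $\{1,2\}$, determined by the two images $T_a(\{1\})$ and $T_a(\{2\})$, and $w$ is accepted iff $f\in T_{w_4}T_{w_3}T_{w_2}T_{w_1}(\{1\})$. This is a finite check, and the structural obstruction is clear: accepting $1111$ forces a $1$-labelled walk of length $4$ on two vertices and hence a $1$-cycle, and once the $0$-transitions needed to accept $0110$ are added, monotonicity forces some forbidden length-$4$ word (for instance a relative of $0110$ or $1111$ obtained by toggling one symbol) into the language.

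For the second assertion I would first reduce it to a single impossibility. Because no length-$4$ word outside $\F$ is accepted, the total number of accepting paths of length $4$ equals (paths for $0110$) $+$ (paths for $1111$), and each summand is at least $1$; hence the count is below $3$ only in the case $(1,1)$. So it suffices to show that no $3$-state NFA accepts $0110$ and $1111$ each along a unique path while accepting no other length-$4$ word. I would exploit that $(M_1^4)_{1,f}=1$ is a severe constraint on the $1$-graph: a unique length-$4$ walk from state $1$ to $f$ on three vertices forces $M_1$ into a short list of shapes (a simple path through all three states, or a walk that traverses a $2$-cycle or a $3$-cycle exactly as often as the unique walk permits). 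For each shape the $1$-edges are essentially pinned down; I would then add the $0$-edges required to realize the walk $1\xrightarrow{0}\cdot\xrightarrow{1}\cdot\xrightarrow{1}\cdot\xrightarrow{0}f$ for $0110$ and check, shape by shape, that doing so either creates a second $1$-labelled length-$4$ walk from $1$ to $f$ (a second path for $1111$) or opens a length-$4$ walk spelling a forbidden word. The automaton of Figure \ref{hot} is exactly the boundary case this analysis is meant to explain: there the $1$-cycle $q_1\leftrightarrow q_2$ together with the back-edge $q_1\to q_0$ gives $1111$ two paths, and the doubling is forced.

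The main obstacle is the completeness, rather than the individual steps, of the $k=3$ case analysis. Nondeterminism makes the bookkeeping delicate: the same word can be accepted along several walks that weave through a repeated state, so ruling out the $(1,1)$ configuration requires simultaneously controlling a path count ($=1$ for each target word) and a membership condition ($=0$ for fourteen words), over roughly $2^{9}\cdot 2^{9}\cdot 3$ matrix pairs. Ensuring that no shape of the $1$-graph has been overlooked, and that every admissible placement of $0$-edges has been tested, is precisely where a hand argument becomes error-prone; this is why we state the result as \computer\ and rely on exhaustive verification over canonical state sequences.
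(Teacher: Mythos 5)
Your proposal is correct and matches the paper's approach: Theorem \ref{dddd} carries only the tag ``\computer,'' i.e.\ the paper gives no human-readable argument and relies on exhaustive search over small NFAs, which is exactly what your plan reduces to. Your added reductions (the matrix-power reformulation, the elimination of $k\le 2$, and the observation that the path count is $\mathrm{paths}(0110)+\mathrm{paths}(1111)\ge 2$ so that only the $(1,1)$ configuration must be excluded) are sound and would make a certificate more readable, but the decisive $3$-state case is still deferred to the machine, just as in the paper.
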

		\begin{thm}\label{chokoUndKeks}
			There is an $n$ and a finite set $\F\subseteq\{0,1\}^n$ such that
			\[
				A_N^{\wordc}(\F)\ne A_N^{\pathc}(\F).
			\]
		\end{thm}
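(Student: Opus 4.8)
The plan is to instantiate the theorem with the smallest interesting set, namely $n=4$ and $\F=\{0110,1111\}$, so that $f=\abs{\F}=2$, and to read off both complexities from the computer-verified Theorem \ref{dddd} together with the explicit automaton of Figure \ref{hot}.

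First I would pin down the word-counting value exactly. The automaton $C$ of Figure \ref{hot} satisfies $L(C)\cap\{0,1\}^4=\F$ using only $3$ states, so $A^{\wordc}_N(\F)\le 3$; for word-counting it is irrelevant that $C$ accepts $1111$ along two paths, since both members of $\F$ are accepted and nothing outside $\F$ of length $4$ is. The matching lower bound $A^{\wordc}_N(\F)\ge 3$ is exactly the first assertion of Theorem \ref{dddd}, that no NFA with fewer than $3$ states can recognize $\F$ at length $4$. Hence $A^{\wordc}_N(\F)=3$.

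Next I would show $A^{\pathc}_N(\F)\ge 4$. By definition $A^{\pathc}_N(\F)$ is the least number of states of an NFA $M$ with $L(M)\cap\{0,1\}^4=\F$ that moreover has only $\abs{\F}=2$ accepting paths of length $4$. Theorem \ref{dddd} tells us that every $3$-state NFA recognizing $\F$ has at least $3$ accepting paths of length $4$, so no $3$-state witness can meet the path constraint; since $2$ states are already excluded, $A^{\pathc}_N(\F)\ge 4$. An upper bound is available but not needed: the Chambers--Hyde bound (Theorem \ref{thermalbad}) gives $A^{\pathc}_N(\F)\le f\lfloor n/2\rfloor+1=5$. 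Combining the two computations yields $A^{\wordc}_N(\F)=3<4\le A^{\pathc}_N(\F)$, which is the desired strict inequality.

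The genuine content, and hence the main obstacle, is entirely inside Theorem \ref{dddd}: one must rule out every $2$-state NFA and show that every $3$-state NFA recognizing $\{0110,1111\}$ is forced to accept one of the two words along a repeated route, producing at least three accepting paths. The intuition is that recognizing the homogeneous word $1111$ with so few states essentially forces a short all-$1$ cycle, and such a cycle can be traversed in more than one way at length $4$; making this rigorous across all $3$-state NFAs is precisely the finite but delicate case analysis that is delegated to the computer. Everything else in the present theorem is bookkeeping around that fact.
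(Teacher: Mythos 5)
Your proposal is correct and follows exactly the paper's route: instantiate with $n=4$ and $\F=\{0110,1111\}$, get $A_N^{\wordc}(\F)=3$ from Figure \ref{hot} plus the first clause of Theorem \ref{dddd}, and get $A_N^{\pathc}(\F)\ge 4$ from the second clause, which rules out any $3$-state witness with only two accepting paths. The paper's own proof is just a terser version of the same bookkeeping, likewise delegating all the real work to the computer-verified Theorem \ref{dddd}.
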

		\begin{proof}
			Let $n=4$ and let $\F=\{0110, 1111\}$. By Theorem \ref{dddd}, the automaton $C$ in Figure \ref{hot} is optimal.
			Thus
			\[
				A_N^{\wordc}(\F) = 3 < 4 \le A_N^{\pathc}(\F).\qedhere
			\]
		\end{proof}

	\section*{Acknowledgments}
		We thank
		Greg Igusa for contributing ideas to
		Section \ref{Igusa}.
		This work
		was partially supported by
		a grant from the Simons Foundation (\#315188 to Bj\o rn Kjos-Hanssen).
		This material is based upon work supported by the National Science Foundation under Grant No.\ 1545707.
	\bibliographystyle{plain}
	\bibliography{few-paths-fewer-words}

\begin{thebibliography}{1}

\bibitem{Chambers}
Patrick Chambers.
\newblock Automatic complexity of doubletons.
\newblock Verbal presentation as a final exam for MATH 657, University of
  Hawaii at M\=anoa, Spring 2014.

\bibitem{Kjos-EJC}
Kayleigh Hyde and Bj{\o}rn Kjos-Hanssen.
\newblock Nondeterministic automatic complexity of overlap-free and almost
  square-free words.
\newblock {\em Electron. J. Combin.}, 22(3):Paper 3.22, 18 pp., 2015.

\bibitem{Kjos-SIAM}
B.~{Kjos-Hanssen}.
\newblock {Shift registers fool finite automata}.
\newblock {\em ArXiv e-prints}, July 2016.

\bibitem{Kjos-TCS}
Bj{\o}rn Kjos-Hanssen.
\newblock Kolmogorov structure functions for automatic complexity.
\newblock {\em Theoret. Comput. Sci.}, 607(part 3):435--445, 2015.

\bibitem{Rissanen}
Jorma Rissanen.
\newblock {\em Information and complexity in statistical modeling}.
\newblock Information Science and Statistics. Springer, New York, 2007.

\bibitem{MR1897300}
Jeffrey Shallit and Ming-Wei Wang.
\newblock Automatic complexity of strings.
\newblock {\em J. Autom. Lang. Comb.}, 6(4):537--554, 2001.
\newblock 2nd Workshop on Descriptional Complexity of Automata, Grammars and
  Related Structures (London, ON, 2000).

\bibitem{StaigerTCS}
Ludwig Staiger.
\newblock The {K}olmogorov complexity of infinite words.
\newblock {\em Theoret. Comput. Sci.}, 383(2-3):187--199, 2007.

\bibitem{MR2103496}
Nikolai~K. Vereshchagin and Paul M.~B. Vit{\'a}nyi.
\newblock Kolmogorov's structure functions and model selection.
\newblock {\em IEEE Trans. Inform. Theory}, 50(12):3265--3290, 2004.

\end{thebibliography}
\end{document}